\newtheorem{theorem}{Theorem}
\newtheorem{definition}{Definition}
\newtheorem{corollary}{Corollary}
\newtheorem{lemma}{Lemma}
\newtheorem{assumption}{Assumption}
\newtheorem{proposition}{Proposition}
\newtheorem{remark}{Remark}
\def\field#1{\mathbb #1}%
\def\R{\field{R}}%
\def\red{\textcolor{red}}
\newcommand \sgn   {\text{sgn}}
\newcommand{\normt}[1]{{\left\vert\kern-0.25ex\left\vert\kern-0.25ex\left\vert #1 
    \right\vert\kern-0.25ex\right\vert\kern-0.25ex\right\vert}}
\title{Higher Order Convergent Control Barrier Functions for Leader-Follower Multi-Agent Systems under STL Tasks}%
\author{Maryam Sharifi and
	Dimos~V.~Dimarogonas \thanks{This work was supported by the ERC CoG LEAFHOUND, the Swedish Research
Council (VR) and the Knut \& Alice Wallenberg Foundation (KAW).
	The authors are with Division of Decision and Control Systems, School of Electrical Engineering and Computer Science, KTH Royal Institute of Technology, Stockholm, Sweden. \texttt{\{msharifi, dimos\}@kth.se.}}
}
\date{}%
\begin{document}

\maketitle%
\begin{abstract}
This paper presents control strategies based on time-varying convergent higher order control barrier functions for a class of leader-follower multi-agent systems under signal temporal logic (STL) tasks. Each agent is assigned a local STL task which may be dependent on the behavior of agents involved in other tasks. 
The leader has knowledge on the associated tasks and controls the performance of the subgroup involved agents. Robust solutions for the task satisfaction, based on the leader's accessibility to the follower agents' states are suggested. 
Our approach finds solutions to guarantee the satisfaction of STL tasks independent of the agents' initial conditions. 
\end{abstract}
\section{Introduction}
Improved capabilities of coordination in a group of systems over single-agent systems to handle task complexity and robustness to agent failures, makes the field of multi-agent systems a popular research topic.
However, many complex tasks may not be defined as stand-alone traditional control objectives and need employing some tools from computer science such as formal verification in order to define general task specifications in temporal logic formulations that induce a sequence of control actions \cite{kloetzer2009automatic}.
Among those formulations, signal temporal logic (STL) is more beneficial as it is interpreted over continuous-time signals \cite{maler2004monitoring}, allows for imposing tasks with strict deadlines and introduces quantitative robust semantics
\cite{fainekos2009robustness}.

Leader-follower approaches, where a subset of agents are responsible for guiding the whole group to satisfy STL tasks, will contribute to important attributes of multi-agent systems such as scalability. Some recent researches in the leader-follower framework have focused on leader selection for optimal performance \cite{fitch2016optimal} or prescribed performance  control strategies \cite{chen2020leader}. However, they don't take into account complex tasks with space and time constraints prescribed by STL. 

We present control strategies for first and second order leader-follower networks under local STL tasks.
For this aim, we present a notion of \emph{time-varying convergent higher order control barrier functions (TCHCBF)} to address the high relative degree constraints in the case of second order agent dynamics.
Control barrier functions \cite{ames2016control} guarantee the existence of a control law that renders a desired set forward invariant. Nonsmooth and higher order control barrier functions are provided in \cite{glotfelter2017nonsmooth} and \cite{xiao2019control}, respectively. 
Nevertheless, appropriate control barrier functions to maintain the desired behavior of leader-follower multi-agent systems under STL tasks haven't been introduced yet, to the best of our knowledge.
We consider connected graph topologies where each local STL task is defined on a subset of connected agents containing one leader. 
The leader agent has the knowledge of the associated local task and is responsible for its satisfaction. 
The followers are not aware of the prescribed tasks and don't have any control authority to meet them. 

We first consider the case of first order dynamics leader-follower networks. Due to the deficiencies in the rank of the input matrix, there exist singularities in the associated constraints. This issue results from the under-actuated property of the system caused by the follower agents which are not influenced by direct actuation.
We tackle the singularities by providing novel barrier function certificates for specific graph topologies to guarantee fixed-time convergence to the specified safe sets and remaining there onwards. We call these sets \emph{fixed-time convergent and forward invariant}.
We then consider second order leader-follower networks, where the relative-degree of each agent is two. Moreover, there exist again singularities which cause infeasibilties in the satisfaction of required constraints due to the existence of follower agents. We provide higher order convergent control barrier functions and singularity avoidance solutions to satisfy specifications.
In this paper, we extend our previous work \cite{sharifi2021} for the framework of leader-follower networks, where control barrier certificates for first and second order dynamics leader-follower networks based on the knowledge of the leader from the followers are provided, which guarantee convergence and forward invariance of the desired sets. 
We provide relaxed barrier certificates for the input signal in the presence of partial knowledge of the leader from the network, while there is no need for the leader to know the upper bound of the norm corresponding to the dynamic terms of non-neighbor agents. This upper-bound determines the ultimate convergent set for the network under the specified tasks.


The rest of the paper is organized as follows. Section \ref{setup} gives some preliminaries on STL, leader-follower multi-agent systems and time-varying barrier functions. First order systems are considered in Sections \ref{solution3}. Higher order leader-follower networks are considered in Section \ref{solution2}. Simulation results and some concluding points are presented in Sections \ref{sim} and \ref{conc}, respectively.
\section{Preliminaries and problem formulation}\label{setup}
\subsection{Signal temporal logic (STL)}
Signal temporal logic (STL) \cite{maler2004monitoring} is based on predicates $\nu$ which are obtained by evaluation of a continuously differentiable predicate function $h: \mathbb{R}^{d}\to\mathbb{R}$ as $\nu:=\top$ (True) if $h(\mathbf{x})\geq 0$ and $\nu:=\bot$ (False) if $h(\mathbf{x})< 0$ for $\mathbf{x}\in\mathbb{R}^{d}$. The STL syntax is then given by
\begin{align*}
\phi ::=\top |\nu|\neg\phi|\phi' \wedge {\phi ''}| \phi' U_{\left[ {a,b} \right]}{\phi ''},
\end{align*}
where $\phi'$ and $\phi ''$ are STL formulas and $U_{\left[ {a,b} \right]}$ is the until operator with $a\leq b<\infty$. In addition, define $F_{\left[ {a,b} \right]}\phi:=\top U_{\left[ {a,b} \right]}\phi$ (eventually operator) and $G_{\left[ {a,b} \right]}\phi:=\neg F_{\left[ {a,b} \right]}\neg\phi$ (always operator). 
Let $(\mathbf{x},t)\models\phi$ denote the satisfaction relation. A formula $\phi$ is satisfiable if $\exists\mathbf{x}:\mathbb{R}_{\geq 0}\to\mathbb{R}^d$ such that $(\mathbf{x},t)\models\phi$.
\begin{definition}\cite{maler2004monitoring}
(STL Semantics): For a signal $\mathbf{x}:\mathbb{R}_{\geq 0}\to\mathbb{R}^d$, the STL semantics are recursively given by:
\begin{align*}
   &(\mathbf{x},t)\models\nu \;\;\;\;\;\;\;\;\;\;\;\;\;\Leftrightarrow h(\mathbf{x})\geq 0,\\
   &(\mathbf{x},t)\models\neg\phi\;\;\;\;\;\;\;\;\;\;\;\Leftrightarrow \neg((\mathbf{x},t)\models\phi),\\
   &(\mathbf{x},t)\models\phi'\wedge\phi''\;\;\;\;\Leftrightarrow (\mathbf{x},t)\models\phi'\wedge(\mathbf{x},t)\models\phi'',\\
    &(\mathbf{x},t)\models\phi' U_{\left[ {a,b} \right]}{\phi ''}\Leftrightarrow \exists t_1\in{\left[ {t+a,t+b} \right]}\; s.t. (\mathbf{x},t_1)\models\phi''\\&\;\;\;\;\;\;\;\;\;\;\;\;\;\;\;\;\;\;\;\;\;\;\;\;\;\;\;\;\;\;\;\;\;\wedge\forall t_2\in{\left[ {t,t_1} \right]}, (\mathbf{x},t_2)\models\phi',\\
    &(\mathbf{x},t)\models F_{\left[ {a,b} \right]}{\phi}\;\;\;\;\;\Leftrightarrow\exists t_1\in{\left[ {t+a,t+b} \right]}\; s.t. (\mathbf{x},t_1)\models\phi,\\
        &(\mathbf{x},t)\models G_{\left[ {a,b} \right]}{\phi}\;\;\;\;\;\Leftrightarrow\forall t_1\in{\left[ {t+a,t+b} \right]}\; s.t. (\mathbf{x},t_1)\models\phi. 
\end{align*}
\end{definition}
\subsection{Leader-follower multi-agent systems}\label{LF_form}
Consider a connected undirected graph $\mathcal{G}:=(\mathcal{V}, \mathcal{E})$, where $\mathcal{V}:=\{{1,\cdots,n}\}$ indicates the set consisting of $n$ agents and $\mathcal{E}\in \mathcal{V}\times\mathcal{V}$ represents communication links. 
Without loss of generality, we suppose the first $n_f$ agents as followers and the last $n_l$ agents as leaders, with corresponding vertices, sets denoted as $\mathcal{V}_f:=\{{1,\cdots,n_f}\}$ and $\mathcal{V}_l:=\{{n_f+1,\cdots,n_f+n_l}\}$, respectively, with $n_f+n_l=n$. 
Let $p_i\in\mathbb{R}$, $v_i\in\mathbb{R}$ and $u_i\in\mathbb{R}$ denote the position, velocity and control input of agent $i\in\mathcal{V}$, respectively.  Moreover, $\mathcal{N}_i$ denotes the set of neighbors of agent $i$ and $|\mathcal{N}_i|$ determines the cardinality of the set $\mathcal{N}_i$.
In addition, we define the stacked vector of all elements in the set $\mathcal{X}$ with cardinality $|\mathcal{X}|$, as $[x_i]_{i\in\mathcal{X}} :=[x_{i_1}^\top,\cdots, x_{i_{|\mathcal{X}|}}^\top]^\top$, $i_1,\cdots,i_{|\mathcal{X}|}\in\mathcal{X}$.
Then, the $1^{st}$ order dynamics of agent $i$ can be described as 
\begin{align}\label{agent_singl}
&\dot p_i=\mathfrak{f}_i^s(p_i,[p_j]_{j\in\mathcal{N}_i})+b_i \mathfrak{g}_i^s(p_i)u_i,
\end{align}
where $b_i=0$, $i\in\{{1,\cdots,n_f}\}$, indicates the followers and $b_i=1$, $i\in\{{n_f+1,\cdots,n_f+n_l}\}$, denotes the leaders.
 In addition,  $\mathfrak{f}_i^s:\mathbb{R}^{1+|\mathcal{N}_i|}\to\mathbb{R}$, $\mathfrak{g}_i^s:\mathbb{R}\to\mathbb{R}$ are assumed to be locally Lipschitz continuous functions. 
 
We also introduce the $2^{nd}$ order dynamics for the followers for $b_i=0$ and the leaders for $b_i=1$ as follows.
\begin{align}\label{agent_l}
&\dot p_i=v_i\nonumber\\
&\dot v_i=\mathfrak{f}_i^d(p_i, [p_j]_{j\in\mathcal{N}_i},v_i, [v_j]_{j\in\mathcal{N}_i})+b_i \mathfrak{g}_i^d(v_i)u_i,
\end{align}
 in which $\mathfrak{f}_i^d:\mathbb{R}^{2+2|\mathcal{N}_i|}\to\mathbb{R}$, $\mathfrak{g}_i^d:\mathbb{R}\to\mathbb{R}$ are locally Lipschitz continuous functions. 

We consider the STL fragment
\begin{subequations}
	\begin{align}
&\psi::= \top|\nu|\psi'\wedge\psi'',\label{1st}\\
&\phi::=G_{\left[ {a,b} \right]}\psi|F_{\left[ {a,b} \right]}\psi|\psi'U_{\left[ {a,b} \right]}\psi''|\phi'\wedge\phi'',\label{2nd}
\end{align}
\end{subequations}
where $\psi',\psi''$ are formulas of class $\psi$ in \eqref{1st} and $\phi',\phi''$ are formulas of class $\phi$ in \eqref{2nd}. 
It is worth mentioning that these formulas can be extended to consider disjunctions ($\vee$) using automata based approaches \cite{lindemann2020efficient}.

Consider formulas $\phi^s$ and $\phi^d$ of the form \eqref{2nd}, corresponding to the $1^{st}$ and $2^{nd}$ order leader-follower multi-agent systems, respectively. The formula $\phi^s$ (resp. $\phi^d$) consists of a number of temporal operators and its satisfaction depends on the behavior of the set of agents $\mathcal{V}=\{1,\cdots,n\}$.
By behavior of an agent $i$, we mean the state trajectories that evolve according to \eqref{agent_singl} (resp. \eqref{agent_l}).

\begin{assumption}\label{concave}
	Predicate functions in $\phi^s$ (resp. $\phi^d$) are concave.
	\end{assumption}
	Concave predicate functions contain linear functions as well as functions corresponding to reachability tasks ($\left\| x-p \right\|^2\leq \epsilon$, $p\in\R^n$, $\epsilon\geq 0$). As the minimum of concave predicate functions is again concave, 
	 they are useful in constructing valid control barrier functions \cite[Lemmas 3, 4]{lindemann2020barrier}.


	Based on \eqref{agent_singl} and \eqref{agent_l}, we write the stacked dynamics for the set of agents in $i\in\mathcal{V}$, as
	\begin{align}\label{agent_f1e}
{\dot x^s}=\mathfrak{f}^s(x^s)+\mathfrak{g}^s(x^s)u,
\end{align}
for the $1^{st}$ order dynamics and 
\begin{align}\label{agent_fe}
{\dot x^d}=\mathfrak{f}^d(x^d)+\mathfrak{g}^d(x^d)u,
\end{align}
for the $2^{nd}$ order dynamics,
where 
$x^s\!:=\left[x_{i}^s\right]_{i\in\mathcal{V}}=\left[p_i\right]_{i\in\mathcal{V}}\!\in\!\mathcal{S}^s\subseteq{\mathbb{R}^{n}}$,
$\mathfrak{f}^s(\cdot)=\left[\mathfrak{f}_{i}^s(\cdot)\right]_{i\in\mathcal{V}}\in\!\mathbb{R}^{n}$, $x^d:=\left[x_{i}^d\right]_{i\in\mathcal{V}}=\left[p_i;v_i\right]_{i\in\mathcal{V}}\in\mathcal{S}^d\subseteq{\mathbb{R}^{2n}}$,
$\mathfrak{f}^d(\cdot)=\left[\mathfrak{f}_{i}^d(\cdot)\right]_{i\in\mathcal{V}}\in\!\mathbb{R}^{2n}$.
 Without loss of generality, we consider functions $\mathfrak{f}_{i}^s(x^s)$ and $\mathfrak{f}_{i}^d(x^d)$ as $\mathfrak{f}_{i}^s(x^s)=\mathfrak{f}_{i,i}^s(x_{i}^s)+\sum\nolimits_{j\in\mathcal{V}, j\ne i}\mathfrak{f}_{i,j}^s(x_{i}^s,x_{j}^s)$and $\mathfrak{f}_{i}^d(x^d)=\mathfrak{f}_{i,i}^d(x_{i}^d)+	\sum\nolimits_{j\in\mathcal{V}, j\ne i}\mathfrak{f}_{i,j}^d(x_{i}^d,x_{j}^d)$, respectively. The local dynamic function $\mathfrak{f}_{i,i}^s(x_{i}^s)$ corresponds to the terms of $\mathfrak{f}_{i}^s(x^s)$ which are only dependent on $p_i^s$, and $\mathfrak{f}_{i,j}^s(x_{i}^s,x_{j}^s)$ contains the terms of $\mathfrak{f}_{i}^s(x^s)$ which are dependent on agent $j\in\mathcal{V}, j\ne i$ as well. The same holds for $\mathfrak{f}_{i}^d(x^d)$. We assume local dynamics of the agents are stable.
 For the case of one leader, with follower and leader sets $\mathcal{V}_{f}:=\{{1,\cdots,n-1}\}$ and $\mathcal{V}_{l}:=\{{n}\}$, respectively, the input matrices and control input signal are defined as $\mathfrak{g}^s(\cdot):=\left[ {\begin{array}{*{20}{c}}
0_{{n-1}\times 1}^T,\mathfrak{g}_{n}^s(\cdot)
\end{array}} \right]^T$, $\mathfrak{g}^d(\cdot):=\left[ {\begin{array}{*{20}{c}}
0_{{2n-1}\times 1}^T,\mathfrak{g}_{{n}}^d(\cdot)
\end{array}} \right]^T$, and $u:=u_{n}\in\mathbb{R}$.
Note that the input matrices $\mathfrak{g}^s(\cdot)$ and $\mathfrak{g}^d(\cdot)$ are not full row rank. 

\begin{definition} \cite{ames2016control}
      A continuous function $\lambda:(-b,a)\Rightarrow\mathbb{R}$ for some $a,b>0$ is called an extended class $\mathcal{K}$ function if it is strictly increasing and $\lambda(0)=0$.  
\end{definition}
			\subsection{Time-varying barrier functions}\label{solution}
			Let $\mathfrak{h}^s( x^s,t):\mathbb{R}^{ n}\times\mathbb{R}_{\geq 0}\to\mathbb{R}$ (resp. $\mathfrak{h}^d( x^d,t):\mathbb{R}^{ 2n}\times\mathbb{R}_{\geq 0}\to\mathbb{R}$) be a piece-wise differentiable function.
			The time-varying barrier function $\mathfrak{h}^s(x^s,t)$ (resp. $\mathfrak{h}^d(x^d,t)$) is built corresponding to the STL task $\phi^s$ (resp. $\phi^d$) related to the multi-agent system \eqref{agent_f1e} (resp. \eqref{agent_fe}).
				 Consider the $1^{st}$ order dynamic network \eqref{agent_f1e}. 
Following the procedure in \cite{lindemann2020barrier}, we construct the barrier function, piece-wise continuous in the second argument, for the conjunctions of a number of $q^s$ single temporal operators in $\phi^s$, by using a smooth under-approximation of the min-operator. Accordingly, consider the continuously differentiable barrier functions ${\mathfrak{h}_j^s}(x^s,t)$, ${j \in \{ 1, \cdots ,{q^s}\} }$, corresponding to each temporal operator in $\phi^s$. Then, we have $\mathop {\min }\limits_{j \in \{ 1, \cdots ,{q^s}\} } {\mathfrak{h}_j^s}({{x}^s},t) \approx -\frac{1}{\eta^s}\rm{ln}(\sum\limits_{\it{j} = 1}^{\it{q^s}} {\exp ( - {\it{\eta^s{\mathfrak{h}_j^s}}}({\it{{x}^s}},t))} )$, with parameter $\eta^s>0$ that is proportionally related to the accuracy of this approximation.
	In view of~\cite[Steps A, B, and C]{lindemann2020barrier}, the corresponding barrier function to $\phi^s$ could be constructed as 
		\begin{align}\label{barrier}
		   \mathfrak{h}^s( x^s,t):=-\frac{1}{\eta^s}\rm{ln}(\sum\limits_{\it{j} = 1}^{\it{q^s}} {\exp ( - {\it{\eta}^s{{\mathfrak{h}_j^s}}}({\it{{ x}^s}},t))}),
		\end{align}
		 where each ${\it{{\mathfrak{h}_j^s}}}({\it{{ x}^s}},t)$ is related to an always or eventually operator specified for the time interval $\left[a_{j},b_{j}\right]$. Whenever the $j$th temporal operator is satisfied, its corresponding barrier function ${\it{{\mathfrak{h}_j^s}}}({\it{{x}^s}},t)$ is deactivated and hence a switching occurs in $ \mathfrak{h}^s( x^s,t)$. This time-varying strategy helps reducing the conservatism in the presence of large numbers of conjunctions \cite{lindemann2020barrier}. 
		Due to the knowledge of $\left[a_{j},b_{j}\right]$, the switching instants can be known in advance. Denote the switching sequence as $\{\tau_{0}:=t_0,\tau_{1},\cdots,\tau_{p^s}\}$.
		At time $t\geq \tau_{l}$, the next switch occurs at $\tau_{l+1}:=\rm{argmin}\it{_{b_{j}\in\{b_{1},...,b_{q^s}\}}\zeta(b_{j},t)}$, ${l\in\{0,\cdots,p^s-1\}}$, where $\zeta(b_{j},t):=\left\{ \begin{array}{l}
b_{j} - t,\;\;b_{j} - t > 0\\
\infty,\;\;\;\;\;\;\;\;\rm{otherwise}
\end{array} \right.$.
		\begin{definition}(Forward Invariance)
			Consider the set
		\begin{align}\label{safe_set}
		\mathfrak{C}^s(t):=\{x^s\in\mathbb{R}^{n}|\mathfrak{h}^s( x^s,t)\geq 0\}.
		\end{align} 
		The set $\mathfrak{C}^s(t)$ is forward invariant with a given control law $u$ for \eqref{agent_f1e}, if for each initial condition $x_0^s\in \mathfrak{C}^s(t_0)$, there exists a unique solution $x^s : [t_0, t_1] \to \mathbb{R}^n$ with $x(t_0) = x_0^s$, such that $x^s(t)\in \mathfrak{C}^s(t)$ for all $t\in [t_0, t_1]$.
		\end{definition}
		If $\mathfrak{C}^s(t)$
		is forward invariant, then it holds that $x^s\models\phi^s$.
			Note that since at each switching instant, one control barrier function ${\mathfrak{h}_j^s}({\it{{ x}^s}},t))$ is removed from $ \mathfrak{h}^s( x^s,t):=-\frac{1}{\eta^s}\rm{ln}(\sum\limits_{\it{j} = 1}^{\it{q^s}} {\exp ( - {\it{\eta}^s{{\mathfrak{h}_j^s}}}({\it{{ x}^s}},t))})$, the set $\mathfrak{C}^s(t)$ is non-decreasing at these switching instants.
		Hence, for each switching instant $\tau_{l}$, it holds that $\mathop {\lim }\limits_{t  \to \tau_{l}^{-} } {\mathfrak{C}^s}(t ) \subseteq {\mathfrak{C}^s}(\tau_{l})$, where $\mathop {\lim }\limits_{t  \to \tau_{l}^{-} } {\mathfrak{C}^s}(t)$ is the left-sided limit of ${\mathfrak{C}^s}({t})$ at $t=\tau_{l}$. 
	
			We also assume that
		the set $\mathfrak{C}^s$ is compact and non-empty.
\begin{definition}
     We denote the set ${\mathfrak{C}^s}(t)$ to be \textbf{fixed-time convergent} for \eqref{agent_f1e}, if there exists a  user-defined, independent of the initial condition, and finite time $T^s>t_0$, such that $\lim_{t\to T^s}x^s(t)\in{\mathfrak{C}^s}(t)$. Moreover, the set ${\mathfrak{C}^s}(t)$ is \textbf{robust fixed-time convergent} if $\lim_{t\to T^s}x^s(t)\in{\mathfrak{C}_{rf}^s}(t)$, where ${\mathfrak{C}_{rf}^s}(t)\supset{\mathfrak{C}^s}(t)$, and \textbf{robust convergent} for \eqref{agent_f1e}, if $\lim_{t\to \infty}x^s(t)\in{\mathfrak{C}_{rf}^s}(t)$.
     The set ${\mathfrak{C}_{rf}^s}(t)$ is characterized as $\mathfrak{C}_{rf}^s(t):=\{x^s\in\mathbb{R}^{n}|\mathfrak{h}^s( x^s,t)\geq -\epsilon_{\max}^s\}$, where $\epsilon_{\max}^s$ is a bounded and positive value.
\end{definition}
		The same properties hold for the barrier functions $\mathfrak{h}^d(x^d,t)$ and the set $\mathfrak{C}^d(t)$ corresponding to the $2^{nd}$ order dynamic network \eqref{agent_fe} under the task $\phi^d$.
		
\section{First order leader-follower multi-agent systems}\label{solution3}
In this section, we provide conditions to guarantee the \emph{fixed-time convergence} property of the set $\mathfrak{C}^s(t)$ corresponding to the STL task of the form \eqref{2nd}, using control barrier certificates for a network of $1^{st}$ order leader-follower agents, based on the leader information of the involved followers.
		Consider the leader-follower network \eqref{agent_f1e} under the task $\phi^s$.
		Let $\mathfrak{h}^s(x^s,t)$ define a time-varying barrier function for this system. 
							 Next, we provide a Lemma to guarantee the \emph{fixed-time convergence and forward invariance} of the set $\mathfrak{C}^s(t)$ given in \eqref{safe_set} for system  \eqref{agent_f1e}, under the following assumption.
		 \begin{assumption}\label{star}
							     The leader agent corresponding to the graph $\mathcal{G}:=(\mathcal{V}, \mathcal{E})$ subject to the task $\phi^s$ has knowledge of the functions $ \frac{{\partial\mathfrak{h}^s({{ x}^s},t)}}{{\partial {x_i^s}}}$ and dynamics ${\mathfrak{f}_{i}^s({x^s})}$, ${i\in\left\{1,\cdots,n\right\}}$. 
							 \end{assumption}
							 A special case satisfying Assumption \ref{star}, is the star topology network with a leader in the middle.
				\begin{lemma}\label{lem1}
					Consider a leader-follower network subject to the dynamics \eqref{agent_f1e} containing one leader, under STL task $\phi^s$ of the form \eqref{2nd} satisfying Assumption \ref{concave}. Suppose that the leader satisfies Assumption \ref{star}.
					Let $\mathfrak{h}^s(x^s,t)$ be a time-varying barrier function associated with the task $\phi^s$, specified in Section \ref{solution}. 
					If for some open set $\mathcal{S}^s$ with $\mathcal{S}^s\supset \mathfrak{C}^s(t)$ , $\forall t\geq t_0$, and for all $(x^s,t)\in \mathcal{S}^s\times[\tau_{l},\tau_{l+1})$, ${l\in\{0,\cdots,p^s-1\}}$, for some constants   $0<\gamma_{1}^s<1$,
					$\gamma_{2}^s>1$, $\alpha^s>0$, $\beta^s>0$ such that $\frac{1}{\alpha^s(1-\gamma_{1}^s)}+\frac{1}{\beta^s(\gamma_{2}^s-1)}\leq\min_{l\in\{0,\cdots,p^s-1\}}\{\tau_{l+1}-\tau_{l}\}$, there exists a control law $u_{n}$ satisfying
							\begin{equation}\label{Ineq1}
					\begin{array}{l}
				\sum\nolimits_{i\in\mathcal{V}} {\frac{{\partial 	\mathfrak{h}^s({{ x}^s},t)}}{{\partial {x_{i}^s}}} {\mathfrak{f}_{i}^s({x^s})}  + \frac{{\partial 	\mathfrak{h}^s({{ x}^s},t)}}{{\partial {x_{n}^s}}}{\mathfrak{g}_{{n}}^s}({x_{n}^s}){u_{n}}}\\+ \frac{{\partial \mathfrak{h}^s({{ x}^s},t)}}{{\partial t}}\ge - {\alpha^s}\;\sgn({	\mathfrak{h}^s({{ x}^s},t)}){	|\mathfrak{h}^s({{ x}^s},t)|}^{\gamma _{1}^s}\\ \;\;\;\;\;\;\;\;\;\;\;\;\;\;\;\;\;\;\;\;-{\beta^s}\;\sgn({	\mathfrak{h}^s({{x}^s},t)}){	|\mathfrak{h}^s({{x}^s},t)|}^{\gamma _{2}^s},
					\end{array}
					\end{equation}
					then the set $\mathfrak{C}^s(t)$ is fixed-time convergent and forward invariant.
					Hence, $x^s\models\phi^s$.
				\end{lemma}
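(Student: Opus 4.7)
The plan is to interpret the hypothesized differential inequality \eqref{Ineq1} as a differential inequality for $\mathfrak{h}^s(x^s(t),t)$ along closed-loop trajectories of \eqref{agent_f1e} on each open switching interval $[\tau_l,\tau_{l+1})$, on which $\mathfrak{h}^s$ is continuously differentiable by construction. On each such interval I would split the analysis in two: a forward-invariance argument when the trajectory already lies in $\mathfrak{C}^s(t)$, and a fixed-time reachability argument when it lies outside. The two pieces are then glued together across switches using the already-observed fact that dropping a $\mathfrak{h}_j^s$ from the soft-min \eqref{barrier} only increases $\mathfrak{h}^s$, so that $\mathfrak{C}^s(\tau_l)\supseteq \lim_{t\to\tau_l^-}\mathfrak{C}^s(t)$ and invariance is inherited at switches.

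For the forward-invariance piece, the key observation is that the right-hand side of \eqref{Ineq1}, viewed as a function of $\mathfrak{h}^s$, is odd in $\mathfrak{h}^s$, vanishes at $\mathfrak{h}^s=0$, and is non-positive for $\mathfrak{h}^s\geq 0$. Consequently $\dot{\mathfrak{h}}^s\geq 0$ on the boundary $\{\mathfrak{h}^s=0\}$, and a standard Nagumo-type argument (equivalently, a comparison with the scalar ODE obtained by replacing $\geq$ with equality) rules out any crossing into $\{\mathfrak{h}^s<0\}$ from inside. Strictly speaking I would contradiction-argue: at a first crossing time $t_1$ one would have $\mathfrak{h}^s(t_1)=0$ but $\dot{\mathfrak{h}}^s>0$ on an immediate right-neighbourhood where $\mathfrak{h}^s<0$, which is impossible.

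For the fixed-time reachability piece, starting from $\mathfrak{h}^s(x^s_0,t_0)<0$, I would set $V:=-\mathfrak{h}^s$; on $\{V>0\}$ the inequality \eqref{Ineq1} becomes $\dot V\leq -\alpha^s V^{\gamma_1^s}-\beta^s V^{\gamma_2^s}$ with $0<\gamma_1^s<1<\gamma_2^s$. Invoking the standard Polyakov-type fixed-time stability comparison lemma then gives $V(t)=0$ for all $t\geq t_0+T^s$ with settling-time bound $T^s:=\frac{1}{\alpha^s(1-\gamma_1^s)}+\frac{1}{\beta^s(\gamma_2^s-1)}$, which is exactly the quantity the hypothesis upper-bounds by $\min_l(\tau_{l+1}-\tau_l)$. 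Hence the trajectory reaches $\mathfrak{C}^s(t)$ strictly before any switch; afterwards the invariance step, together with the non-decreasing property of $\mathfrak{C}^s$ at switches, keeps it inside for all future time. Satisfaction $x^s\models\phi^s$ then follows directly from how $\mathfrak{h}^s$ encodes $\phi^s$ via the soft-min construction of Section \ref{solution}.

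The main obstacle I expect is the bookkeeping around the switching sequence: one must verify that the fixed-time clock (which is started at $t_0$, not at a generic $\tau_l$) and the invariance argument compose correctly, and that the upward jump of $\mathfrak{h}^s$ at each $\tau_l$ cannot retroactively undo reachability that was just achieved. Two secondary points worth being explicit about are where Assumption \ref{star} and Assumption \ref{concave} enter: the former is what makes the full sum on the left-hand side of \eqref{Ineq1} evaluable by the single leader, so that the promised $u_n$ is in fact implementable; the latter guarantees that each $\mathfrak{h}_j^s$, and hence the soft-min $\mathfrak{h}^s$, is continuously differentiable on each interval, which is what the Nagumo and Polyakov arguments above silently require.
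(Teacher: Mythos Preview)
Your proposal is correct and follows essentially the same route as the paper: rewrite \eqref{Ineq1} as a differential inequality for $\mathfrak{h}^s$ along trajectories on each $[\tau_l,\tau_{l+1})$, use the non-decreasing property of $\mathfrak{C}^s$ at switches to reduce to a single interval, and then split into an invariance step and a fixed-time reachability step via $V=-\mathfrak{h}^s$ and the Polyakov-type bound $T^s\le\frac{1}{\alpha^s(1-\gamma_1^s)}+\frac{1}{\beta^s(\gamma_2^s-1)}$. The only cosmetic difference is that the paper handles forward invariance by taking $V^s=\max\{0,-\mathfrak{h}^s\}$ and invoking the Comparison Lemma directly, rather than your Nagumo/first-crossing argument; both are standard and equivalent here.
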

				\begin{proof}
				Consider the inequality \eqref{Ineq1} and dynamics \eqref{agent_f1e}.  Since the leader control signal $u_{n}$ is the only external input responsible for controlling the network, and under Assumption \ref{star}, the inequality \eqref{Ineq1} can be written as follows:
				\begin{align}\label{cond2}
		   &\frac{{\partial {\mathfrak{h}^s({{ x^s}},t)}}}{{\partial {x^s}}}\left( {{\mathfrak{f}^s}({x^s}) + {\mathfrak{g}^s}({x^s}){u}} \right)
+ \frac{{\partial {	\mathfrak{h}^s({x^s},t)}}}{{\partial t}}\nonumber\\& \;\;\;\;\;\;\;\;+ {\alpha^s}\emph{\sgn}({	\mathfrak{h}^s({x^s},t)}){|\mathfrak{h}^s({{x^s}},t)|}^{\gamma _{1}^s}\nonumber\\& \;\;\;\;\;\;\;\; +{\beta^s}\emph{\sgn}({	\mathfrak{h}^s({{x^s}},t)}){|\mathfrak{h}^s({{x^s}},t)|}^{\gamma _{2}^s}\geq 0.
		\end{align}
				Now, consider the satisfaction of \eqref{cond2} for all $(x^s,t)\in\mathcal{S}^s\times[\tau_{l},\tau_{l+1})$ 
				under a control input $u:= u_{n}\in\mathbb{R}$ with positive constants $\gamma_{1}^s<1$, $\gamma_{2}^s>1$, $\alpha^s$, $\beta^s$.
Note that by $\mathop {\lim }\limits_{t  \to \tau_{l}^{-} } {\mathfrak{C}^s}(t ) \subseteq {\mathfrak{C}^s}(\tau_{l})$, it is sufficient to ensure convergence and forward invariance of ${\mathfrak{C}^s}(t)$ for each $[\tau_{l},\tau_{l+1})$.
This is due to the fact that if $\mathfrak{h}^s(x^s,t)\in\mathfrak{C}^s(t)$ for all $t\in[\tau_{l},\tau_{l+1})$, then $\mathfrak{h}^s(x^s,\tau_{l+1})\in\mathfrak{C}^s(\tau_{l+1})$.
		Consider the function $V^s(x^s,t)=\max {\{ 0,-\mathfrak{h}^s(x^s,t)}\}$. Then, for $x^s(t_0)\in \mathfrak{C}^s(t_0)$ ($\mathfrak{h}^s(x^s,t)\geq 0$) we have $V^s(x^s,t)=0$ for all $t\geq t_0$ by the Comparison Lemma \cite{bhat2000finite}. Hence, the set $\mathfrak{C}^s(t)$ is forward-invariant.
	Moreover, for $x^s(t_0)\in \mathcal{S}^s\backslash \mathfrak{C}^s(t_0)$ ($\mathfrak{h}^s(x^s,t)< 0$), we get $V^s(x^s,t)=-\mathfrak{h}^s(x^s,t)$. Thus, \eqref{cond2} can be written as
	\begin{equation*}
	\dot V^s(x^s,t)\leq - {\alpha^s}	{V^s(x^s,t)}^{\gamma _{1}^s} -{\beta^s} {V^s(x^s,t)}^{\gamma _{2}^s},
	\end{equation*}
	which guarantees the fixed-time convergence of $x^s$ to the set $\mathfrak{C}^s(t)$ within $T^s\leq\frac{1}{\alpha^s(1-\gamma_{1}^s)}+\frac{1}{\beta^s(\gamma_{2}^s-1)}$ and staying there onwards, according to \cite{bhat2000finite}. 
		 The proof is complete.
				\end{proof}

			Inspired by \cite[Theorem 2]{black2020quadratic}, we extend the results of Lemma \ref{lem1} to the case of leader partial information from the subgraph, i.e., there exist followers that aren't neighbors of the leader, denoted by $i\notin\mathcal{N}_{n}$. In this case, the \emph{robust fixed-time convergence} property of the set $\mathfrak{C}^s(t)$ is guaranteed.
			
				Next, we impose relaxations on Assumption \ref{star} and provide further results on task satisfaction under new conditions.
					\begin{assumption}\label{bound}
				    Consider the $1^{st}$ order leader-follower network \eqref{agent_f1e} with a single leader $i=n$. We assume that there exists a positive constant $\delta^s$ satisfying $\|\sum\nolimits_{i\in\mathcal{N}_{n}, j\notin\mathcal{N}_{n}} \frac{{\partial \mathfrak{h}^s({{ x}^s},t)}}{{\partial {x_{j}^s}}} {\mathfrak{f}_j^s({x^s})}+\frac{{\partial \mathfrak{h}^s({{ x}^s},t)}}{{\partial {x_{i}^s}}} {\mathfrak{f}_{i,j}^s({x_{i}^s,x_{j}^s})}\|\leq \delta^s$, $\forall(x^s,t)\in\mathcal{S}^s\times[\tau_{l},\tau_{l+1})$, ${l\in\{0,\cdots,p^s-1\}}$. 
				\end{assumption}
				\begin{remark}
				Note that the function $\mathfrak{h}^s({{x}^s},t)$ is differentiable $\forall(x^s,t)\in\mathcal{S}^s\times[\tau_{l},\tau_{l+1})$. Moreover, $\mathfrak{f}^s(x^s)$ is Lipschitz and subject to a connected network.
				Hence, by stability of local state dynamics, we can argue about the boundedness of the stack vector $x^s$. Thus, Assumption \ref{bound} is not strong. Moreover, there is no necessity for the leader to know $\delta^s$. This term is used in determining the ultimate convergent set, as will be demonstrated in the following theorem.
				\end{remark}
					\begin{theorem}\label{lem11}
					Consider a leader-follower multi-agent network subject to the dynamics \eqref{agent_f1e} containing one leader, under STL task $\phi^s$ of the form \eqref{2nd} satisfying Assumption \ref{concave}. 
					Let $\mathfrak{h}^s(x^s,t)$ be a time-varying barrier function associated with the task $\phi^s$, specified in Section \ref{solution}. 
					Suppose that Assumption \ref{bound} is satisfied for the network \eqref{agent_f1e}.
					If for some constants  $\mu^s>1$, $k^s>1$, $\gamma_{1}^s=1-\frac{1}{\mu^s}$, $\gamma_{2}=1+\frac{1}{\mu^s}$, $\alpha^s>0$, $\beta^s>0$, for some open set $\mathcal{S}^s$ with $\mathcal{S}^s\supset \mathfrak{C}^s(t)$, $\forall t\geq 0$, and for all $(x^s,t)\in \mathcal{S}^s\times[\tau_{l},\tau_{l+1})$, $l\in\{0,\cdots,p^s-1\}$, there exists a control law $u_{n}$ such that
							\begin{equation}\label{Ineq2}
					\begin{array}{l}
				\sum\nolimits_{i\in\mathcal{N}_{n}}\frac{{\partial 	\mathfrak{h}^s({{ x}^s},t)}}{{\partial {x_{i}^s}}} {\mathfrak{f}_{i,i}^s({x_{i}^s})}+\frac{{\partial \mathfrak{h}^s({{ x}^s},t)}}{{\partial {x_{n}^s}}} {\mathfrak{f}_{n,i}^s({x_{n}^s,x_{i}^s})}\\+\frac{{\partial \mathfrak{h}_e^s({{ x}^s},t)}}{{\partial {x_{n}^s}}} {\mathfrak{f}_{n,n}^s({x_{n}^s})} + \frac{{\partial \mathfrak{h}^s({{ x}^s},t)}}{{\partial t}}+ \frac{{\partial \mathfrak{h}^s({{ x}^s},t)}}{{\partial {x_{n}^s}}}{\mathfrak{g}_{n}^s}({x_{n}^s}){u_{n}}\\\ge - {\alpha^s}\;\sgn({\mathfrak{h}^s({{ x}^s},t)}){	|\mathfrak{h}^s({{ x}^s},t)|}^{\gamma _{1}}\\ -{\beta^s}\;\sgn({	\mathfrak{h}^s({{x}^s},t)}){	|\mathfrak{h}^s({{x}^s},t)|}^{\gamma _{2}^s},
					\end{array}
					\end{equation}
					with
						\begin{align}\label{T}
						T^s &\le \left\{ \begin{array}{l}
							\frac{\mu^s }{{{\alpha^s}(c^s - b^s)}}\log (\frac{{\left|1+ c^s \right|}}{{\left|1+ b^s \right|}})\;\;\;\;;{\delta^s} > 2\sqrt {{\alpha^s}{\beta^s}} \\
							\frac{\mu^s }{{\sqrt {{\alpha^s}{\beta^s}} }}(\frac{1}{{ k^s-1}})\;\;\;\;\;\;\;\;\;\;\;\;\;\;\;;{\delta^s} = 2\sqrt {{\alpha^s}{\beta^s}} \\
							\frac{\mu^s }{{{\alpha^s}{k_{1}^s}}}(\frac{\pi }{2} - {\tan ^{ - 1}}{k_{2}^s})\;\;\;\;;0\leq{\delta^s} < 2\sqrt {{\alpha^s}{\beta^s}}
						\end{array} \right.\nonumber\\&\leq \min_{l\in\{0,\cdots,p^s-1\}}\{\tau_{l+1}-\tau_{l}\},
					\end{align}
					where $b^s, c^s$ are the solutions of $\gamma^s(s)={\alpha^s}s^2-{\delta^s}s+\beta^s=0$, $k_{1}^s=\sqrt{\frac{4\alpha^s\beta^s -{\delta^s}^2}{4{\alpha^s}^2}}$, $k_{2}^s=-\frac{\delta^s}{\sqrt{{4\alpha^s\beta^s-{\delta^s}^2}}}$, and $\delta^s$ is introduced in Assumption \ref{bound}, then, the set $\mathfrak{C}_{{rf}}^s(t)\supset \mathfrak{C}^s(t)$ defined by
						\begin{align*}
		\mathfrak{C}_{{rf}}^s(t):=\{ x^s\in\mathbb{R}^{n}|\mathfrak{h}^s( x^s,t)\geq -\epsilon_{\max}^s\}
		\end{align*}
		with
			\begin{align}\label{D3}
\epsilon_{\max}^s= \left\{ \begin{array}{l}
							  {(\frac{{{\delta^s} + \sqrt {{\delta^s}^2 - 4{\alpha^s}{\beta^s}} }}{{2{\alpha^s}}})}^{\mu^s}\;\;\;;{\delta^s} > 2\sqrt {{\alpha^s}{\beta^s}} \\
							{{k^s}^{\mu^s} }{{(\frac{{{\beta^s}}}{{{\alpha^s}}})}^{\frac{\mu^s }{2}}}\;\;\;\;\;\;\;\;\;\;\;\;\;\;\;;{\delta^s} = 2\sqrt {{\alpha^s}{\beta^s}} \\
							{{\frac{{{\delta^s} }}{{2{\sqrt {{\alpha^s}{\beta^s}} }}}}}\;\;\;\;\;\;\;\;\;\;\;\;\;\;\;\;\;\;\;\;\;;0\leq{\delta^s} < 2\sqrt {{\alpha^s}{\beta^s}},
						\end{array} \right.
					\end{align}
		is forward invariant and fixed-time convergent within $T^s$ time units, defined in \eqref{T}.
				\end{theorem}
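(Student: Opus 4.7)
The plan is to mirror the proof of Lemma \ref{lem1}, but propagate a bounded disturbance term $\delta^s$ through the analysis to account for the leader's missing information about non-neighbor agents, and then close with a perturbed fixed-time comparison argument in the spirit of \cite{black2020quadratic}.

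First, I would write the total derivative of $\mathfrak{h}^s(x^s,t)$ along \eqref{agent_f1e} and partition it into two groups according to the leader's information pattern: a group that coincides with the left-hand side of \eqref{Ineq2} (the leader's local dynamics, the leader--neighbor cross terms, the neighbors' local dynamics, the time-partial, and the leader input), and a remainder collecting $\frac{\partial \mathfrak{h}^s}{\partial x_j^s}\mathfrak{f}_j^s$ for $j\notin\mathcal{N}_n$ together with the leader's cross terms coupling to such $j$. Assumption \ref{bound} bounds the norm of this remainder by $\delta^s$, so substituting any control $u_n$ satisfying \eqref{Ineq2} yields, on each continuity interval $[\tau_l,\tau_{l+1})$,
\begin{equation*}
\dot{\mathfrak{h}}^s \ge -\alpha^s\,\sgn(\mathfrak{h}^s)\,|\mathfrak{h}^s|^{\gamma_1^s} - \beta^s\,\sgn(\mathfrak{h}^s)\,|\mathfrak{h}^s|^{\gamma_2^s} - \delta^s.
\end{equation*}

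Second, with $V^s(t):=\max\{0,-\mathfrak{h}^s(x^s(t),t)\}$, on $\mathfrak{C}^s(t)$ the function $V^s$ vanishes, while outside we have $V^s=-\mathfrak{h}^s>0$ and the inequality becomes
\begin{equation*}
\dot V^s \le -\alpha^s (V^s)^{\gamma_1^s} - \beta^s (V^s)^{\gamma_2^s} + \delta^s.
\end{equation*}
The core technical step is the scalar comparison for this perturbed ODE. Using $\gamma_{1,2}^s = 1\mp 1/\mu^s$ (so $\gamma_1^s+\gamma_2^s=2$) and the change of variables $s=(V^s)^{1/\mu^s}$, the sign of the decay rate is governed by the quadratic $\gamma^s(s)=\alpha^s s^2-\delta^s s+\beta^s$, and the sign of the discriminant $(\delta^s)^2-4\alpha^s\beta^s$ partitions the analysis into the three cases of \eqref{D3}: two distinct positive roots $b^s<c^s$ (which, after partial-fraction integration, deliver the logarithmic time bound and $\epsilon_{\max}^s=(c^s)^{\mu^s}$), a repeated root (giving the $1/\sqrt{\alpha^s\beta^s}$ expression from the $1/(s-s_0)^2$ integral), and no real roots (giving the $\arctan$ expression). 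A closed-form integration of the comparison ODE in each regime recovers exactly the bound \eqref{T} on $T^s$ and the equilibrium level \eqref{D3} on $\epsilon_{\max}^s$, proving both invariance of $\mathfrak{C}_{rf}^s(t)$ and fixed-time convergence of $V^s$ into $[0,\epsilon_{\max}^s]$ inside each continuity interval.

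Finally, I stitch the continuity intervals together: since $T^s\le\min_l\{\tau_{l+1}-\tau_l\}$, convergence into $\mathfrak{C}_{rf}^s$ is achieved before the next barrier switch, and the inclusion $\lim_{t\to\tau_l^-}\mathfrak{C}^s(t)\subseteq\mathfrak{C}^s(\tau_l)$ (and the analogous inclusion for $\mathfrak{C}_{rf}^s$) propagates forward invariance across switching instants, completing the claim. I expect the scalar comparison of the second step to be the main obstacle: unlike the clean decay in Lemma \ref{lem1}, the constant perturbation $\delta^s$ prevents uniform convergence to zero and forces a case-dependent closed-form integration of the comparison ODE, one per sign of the discriminant, to extract both the precise expressions for $\epsilon_{\max}^s$ and the three branches of $T^s$ in \eqref{T}.
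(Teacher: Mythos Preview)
Your proposal is correct and follows essentially the same route as the paper: add the missing non-neighbor terms back to both sides of \eqref{Ineq2} to recover the full Lie derivative of $\mathfrak{h}^s$, bound the remainder by $\delta^s$ via Assumption~\ref{bound}, pass to $V^s=\max\{0,-\mathfrak{h}^s\}$, and obtain $\dot V^s\le -\alpha^s(V^s)^{\gamma_1^s}-\beta^s(V^s)^{\gamma_2^s}+\delta^s$. The only difference is that the paper invokes \cite[Lemma~1]{sharifi2021} directly for the perturbed fixed-time comparison, whereas you sketch its proof (the substitution $s=(V^s)^{1/\mu^s}$ reducing to the quadratic $\alpha^s s^2-\delta^s s+\beta^s$ and the three discriminant cases); both yield the same $T^s$ and $\epsilon_{\max}^s$.
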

				\begin{proof}
				Inequality \eqref{Ineq2} can be written as 
					\begin{equation}\label{Ineq3}
					\begin{array}{l}
				\sum\nolimits_{i\in\mathcal{N}_{n}} \frac{{\partial \mathfrak{h}^s({{ x}^s},t)}}{{\partial {x_{i}^s}}} {\mathfrak{f}_{i,i}^s({x_{i}^s})}+\frac{{\partial \mathfrak{h}^s({{ x}^s},t)}}{{\partial {x_{n}^s}}} {\mathfrak{f}_{n,i}^s({x_{n}^s,x_{i}^s})} \\+ \sum\nolimits_{i\in\mathcal{N}_{n}, j\notin\mathcal{N}_{n}} \frac{{\partial \mathfrak{h}^s({{ x}^s},t)}}{{\partial {x_{j}^s}}} {\mathfrak{f}_j^s({x^s})}+\frac{{\partial \mathfrak{h}^s({{ x}^s},t)}}{{\partial {x_{i}^s}}} {\mathfrak{f}_{i,j}^s({x_{i}^s,x_{j}^s})}\\ + \frac{{\partial \mathfrak{h}^s({{ x}^s},t)}}{{\partial {x_{n}^s}}} {\mathfrak{f}_{n,n}^s({x_{n}^s})}+\frac{{\partial 	\mathfrak{h}^s({{ x}^s},t)}}{{\partial {x_{n}^s}}}{\mathfrak{g}_{n}^s}({x_{n}^s}){u_{n}}+ \frac{{\partial \mathfrak{h}^s({{ x}^s},t)}}{{\partial t}}\\\ge - {\alpha^s}\;\emph{\sgn}({	\mathfrak{h}^s({{ x}^s},t)}){	|\mathfrak{h}^s({{ x}^s},t)|}^{\gamma _{1}}\\ -{\beta^s}\;\emph{\sgn}({	\mathfrak{h}^s({{x}^s},t)}){	|\mathfrak{h}^s({{x}^s},t)|}^{\gamma _{2}^s}\\
				+ \sum\nolimits_{i\in\mathcal{N}_{n}, j\notin\mathcal{N}_{n}} \frac{{\partial \mathfrak{h}^s({{ x}^s},t)}}{{\partial {x_{j}^s}}} {\mathfrak{f}_j^s({x^s})}+\frac{{\partial \mathfrak{h}^s({{ x}^s},t)}}{{\partial {x_{i}^s}}} {\mathfrak{f}_{i,j}^s({x_{i}^s,x_{j}^s})}.
					\end{array}
					\end{equation}
					Then, we get
								\begin{equation}\label{Ineq3}
					\begin{array}{l}
				\frac{{\partial \mathfrak{h}^s({{ x}^s},t)}}{{\partial {x^s}}} ({\mathfrak{f}^s({x^s})}+ {\mathfrak{g}_{n}^s}({x_{n}^s}){u_{n}})+ \frac{{\partial \mathfrak{h}^s({{ x}^s},t)}}{{\partial t}}\\\ge - {\alpha^s}\;\emph{\sgn}({	\mathfrak{h}^s({{ x}^s},t)}){	|\mathfrak{h}^s({{ x}^s},t)|}^{\gamma _{1}}\\ -{\beta^s}\;\emph{\sgn}({	\mathfrak{h}^s({{x}^s},t)}){	|\mathfrak{h}^s({{x}^s},t)|}^{\gamma _{2}^s}\\
				+ \sum\nolimits_{i\in\mathcal{N}_{n}, j\notin\mathcal{N}_{n}} \frac{{\partial \mathfrak{h}^s({{ x}^s},t)}}{{\partial {x_{j}^s}}} {\mathfrak{f}_j^s({x^s})}+\frac{{\partial \mathfrak{h}^s({{ x}^s},t)}}{{\partial {x_{i}^s}}} {\mathfrak{f}_{i,j}^s({x_{i}^s,x_{j}^s})}.
					\end{array}
					\end{equation}
				It is apparent that the left hand side of \eqref{Ineq3} is equal to the one in \eqref{cond2}, since ${\mathfrak{g}_{n}^s}({x_{n}^s}){u_{n}}={\mathfrak{g}^s}({x^s}){u}$.
				Following the proof of Lemma \ref{lem1}, function $V^s(x^s,t)=\max {\{ 0,-\mathfrak{h}^s(x^s,t)}\}$ is considered. This function satisfies $V^s(x^s,t)=0$ for $x^s(t_0)\in \mathfrak{C}^s(t_0)$.  Therefore, as long as $\mathfrak{h}^s(x^s,t)\geq 0$, $V^s$ remains $0$ and then $ x^s(t)\in \mathfrak{C}^s(t)$, $t\geq t_0$. This ensures the forward invariance of $\mathfrak{C}^s(t)$.
	Moreover, $V^s(x^s,t)>0$ for $x^s\in \mathcal{S}^s\backslash \mathfrak{C}^s(t)$ and by Assumption \ref{bound}, \eqref{Ineq3} can be written as 
					\begin{equation*}
					\begin{array}{l}
				\dot V^s(x^s,t)\leq - {\alpha^s}	{V^s(x^s,t)}^{\gamma _{1}^s} -{\beta^s} {V^s(x^s,t)}^{\gamma _{2}^s}+\delta^s.
					\end{array}
					\end{equation*}
	Thus, according to \cite[Lemma 1]{sharifi2021}, the convergence of $V^s(x^s,t)$ to the set $\mathfrak{C}_{rf}^s(t)\supset \mathfrak{C}^s(t)$ in a fixed-time interval $t\leq T^s$, as in \eqref{T}, is achieved. In addition, considering the forward-invariance of $\mathfrak{C}^s(t)$ besides the convergence property of $\mathfrak{C}_{rf}^s(t)$, ensures forward-invariance of $\mathfrak{C}_{rf}^s(t)$
				\end{proof}
\begin{remark}
Note that due to lack of full information of the leader from the followers, a violation in the constraints satisfaction for $\phi^s$ might occur. This violation has been quantified as a function of $\delta^s$, demonstrated in \eqref{D3}. 
Furthermore, \eqref{T} is feasible provided that the minimum time interval between successive switchings is sufficiently large, such that user defined constants $\alpha^s$, $\beta^s$, $\mu^s$, $k^s$ fulfill \eqref{Ineq2} and \eqref{T}. 
\end{remark}

\section{Higher order leader-follower multi-agent systems}\label{solution2}
In this section, we consider higher-order dynamics multi-agent systems and in order to tackle higher relative degree specifications, provide a class of higher order control barrier functions with the property of convergence to the desired sets and robustness with respect to uncertainties.
\subsection{Convergent higher order control barrier functions}\label{Higher_B}
	Consider the autonomous system 
	\begin{align}\label{BF}
	   \dot{\mathbf{x}}=f(\mathbf{x}), 
	\end{align}
	with $\mathbf{x}\in\mathbb{R}^n$ and locally Lipschitz continuous function $f:\mathbb{R}^n\to\mathbb{R}^n$.
			We introduce class $C^m$ functions $\mathfrak{h}(
	\mathbf{x},t):\mathbb{R}^{n}\times\left[t_0,\infty\right)\to\mathbb{R}$, later called time-varying convergent higher order control barrier functions, to satisfy STL task $\phi$ of the form \eqref{2nd}.
			Define a series of functions $\psi_k:\mathbb{R}^{n}\times \left[{t_0,\infty}\right)\to\mathbb{R}^{n}$, $0\leq k\leq m$, as
			\begin{align}\label{sets}
			&\psi_0(\mathbf{x},t) := \mathfrak{h}(\mathbf{x},t),\nonumber\\
			&\psi_k(\mathbf{x},t) := \dot \psi_{k-1}(\mathbf{x},t)\nonumber\\&\;\;\;\;\;\;\;\;\;\;\;\;\;\;\;\;\; + \lambda_k(\psi_{k-1}(\mathbf{x},t)),\;1\leq k\leq m-1,\nonumber\\
			&\psi_m(\mathbf{x},t) := \dot \psi_{m-1}(\mathbf{x},t) \nonumber\\&\;\;\;\;\;\;\;\;\;\;\;\;\;\;\;\;\;+ \alpha_m\emph{\sgn}({\psi_{m-1}(\mathbf{x},t)})|\psi_{m-1}(\mathbf{x},t)|^{\gamma _{1m}}\nonumber\\&\;\;\;\;\;\;\;\;\;\;\;\;\;\;\;\;\;+\beta_m\emph{\sgn}({\psi_{m-1}(\mathbf{x},t)})|\psi_{m-1}(\mathbf{x},t)|^{\gamma _{2m}},
			\end{align}
			where $\lambda_k(\cdot)$, $k=1,\cdots, m-1$,  are $(m-k)^{th}$\red{-} order differentiable extended class $\mathcal{K}$ functions 
			and $0<\gamma _{1m}<1, \gamma _{2m}>1$, $\alpha_m>0, \beta_m>0$, are user specified constants.
		We define a series of sets $\mathfrak{C}_{k}(t)$, $k=1,\cdots,m$, assumed to be compact, as
		\begin{align}\label{safe_sets2}
		&\mathfrak{C}_k(t):=\{\mathbf{x}\in\mathbb{R}^{n}|\psi_{k-1}(\mathbf{x},t)\geq 0\}.
		\end{align}  
		\begin{definition}\label{HOB}
		 A class $C^m$ function $\mathfrak{h}(\mathbf{x},t):\mathbb{R}^{n}\times\left[t_0,\infty\right)\to\mathbb{R}$ is a time-varying convergent higher order barrier function (TCHBF) of degree $m$ for the system \eqref{BF}, if there exist extended class $\mathcal{K} $ functions $\lambda_k(\cdot)$, $k=1,\cdots, m-1$, constants $0<\gamma_{1m}<1, \gamma _{2m}>1, \alpha_m>0, \beta_m>0$, and an open set $\mathfrak{D}$ with $\mathfrak{C}:=\cap_{k=0}^{m} \mathfrak{C}_{k}\subset\mathfrak{D}\subset\mathbb{R}^{n}$ such that
		\begin{align*}
		    \psi_{m}(\mathbf{x},t)\geq 0,\;\forall (\mathbf{x},t)\in\mathfrak{D}\times\mathbb{R}_{\geq 0},
		\end{align*}
		where $\psi_k(\mathbf x,t)$, $k=0,\cdots, m$, are given in \eqref{sets}.
		\end{definition}
		Next, we aim to show the \emph{convergence and forward invariance} of the set $\mathfrak{C}$. 
		\begin{proposition}\label{prop1}
		    The set $\mathfrak{C}:=\cap_{k=1}^{m} \mathfrak{C}_{k}\subset\mathfrak{D}\subset\mathbb{R}^{n}$ is convergent and forward invariant for system \eqref{BF}, if $\mathfrak{h}(\mathbf{x},t)$ is a TCHBF.
		\end{proposition}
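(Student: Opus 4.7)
The plan is to prove forward invariance and convergence separately, using a cascade/backward induction argument on the level $k$ from $m$ down to $1$. The defining property of a TCHBF is $\psi_m(\mathbf{x},t)\geq 0$ on $\mathfrak{D}\times\mathbb{R}_{\geq 0}$, and the recursive relations \eqref{sets} translate this into a chain of scalar differential inequalities, one for each $\psi_k$. The outermost inequality (coming from $\psi_m\geq 0$) has the fixed-time form already handled in Lemma \ref{lem1}, while the inner ones (from $\psi_k\geq 0$, $k<m$) are standard HOCBF-type conditions involving the extended class $\mathcal{K}$ functions $\lambda_k$ and can be treated with the Comparison Lemma.

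For forward invariance, suppose $\mathbf{x}(t_0)\in\mathfrak{C}$, so $\psi_{k-1}(\mathbf{x}(t_0),t_0)\geq 0$ for every $k=1,\dots,m$. Rearranging $\psi_m\geq 0$ yields
\begin{equation*}
\dot\psi_{m-1}\geq -\alpha_m\sgn(\psi_{m-1})|\psi_{m-1}|^{\gamma_{1m}}-\beta_m\sgn(\psi_{m-1})|\psi_{m-1}|^{\gamma_{2m}}.
\end{equation*}
Applying the non-smooth Lyapunov argument of Lemma \ref{lem1} to $V_{m-1}:=\max\{0,-\psi_{m-1}\}$, the condition $\psi_{m-1}(t_0)\geq 0$ forces $V_{m-1}(t)\equiv 0$, hence $\psi_{m-1}(t)\geq 0$ for all $t\geq t_0$. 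For $k=m-1,\dots,1$, the recursion becomes $\dot\psi_{k-1}\geq -\lambda_k(\psi_{k-1})$; since $\lambda_k$ is extended class $\mathcal{K}$ and $\psi_{k-1}(t_0)\geq 0$, the Comparison Lemma against the scalar ODE $\dot z=-\lambda_k(z)$ with $z(t_0)=\psi_{k-1}(t_0)\geq 0$ gives $\psi_{k-1}(t)\geq z(t)\geq 0$ for all $t\geq t_0$. Iterating down to $k=1$ shows every $\mathfrak{C}_k$ is forward invariant, hence so is $\mathfrak{C}=\cap_{k=1}^m\mathfrak{C}_k$.

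For convergence from $\mathbf{x}(t_0)\notin\mathfrak{C}$, the same Lyapunov function $V_{m-1}$ satisfies $\dot V_{m-1}\leq -\alpha_m V_{m-1}^{\gamma_{1m}}-\beta_m V_{m-1}^{\gamma_{2m}}$ whenever $\psi_{m-1}<0$, which by \cite{bhat2000finite} drives $V_{m-1}$ to zero in a fixed time $T_{m-1}$ independent of initial conditions; thus $\psi_{m-1}$ enters and thereafter remains in $\mathfrak{C}_m$. From that moment on, the inner inequality $\dot\psi_{k-1}\geq -\lambda_k(\psi_{k-1})$ drives $\psi_{k-1}$ toward the non-negative orthant: the comparison solution of $\dot z=-\lambda_k(z)$ with a possibly negative initial value increases monotonically to $0$, because $\lambda_k(0)=0$ and $\lambda_k$ is strictly increasing. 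Cascading this recovery from level $m-1$ down to level $0$ yields $\mathbf{x}(t)\to\mathfrak{C}$.

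I expect the main obstacle to be the bookkeeping across levels: each inner comparison argument can only be activated after the level immediately above it has been trapped in its non-negative set, and the (in general only asymptotic) convergence at intermediate levels must be carried all the way down to convergence of $\psi_0=\mathfrak{h}$ to $\{\mathfrak{h}\geq 0\}$. A secondary technical subtlety is the non-smoothness of $V_{m-1}$ along the switching surface $\psi_{m-1}=0$, which forces the use of the same non-smooth comparison tool already invoked in the proof of Lemma \ref{lem1}.
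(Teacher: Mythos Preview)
Your proposal is correct and follows essentially the same approach as the paper: a backward cascade from level $m$ to level $1$, handling the outermost level $\psi_{m-1}$ via the fixed-time argument of Lemma~\ref{lem1} and the inner levels via comparison against $\dot z=-\lambda_k(z)$. The paper's own proof is in fact terser than yours: for the inner cascade in the forward-invariance part it simply cites \cite[Lemma~2]{glotfelter2017nonsmooth} rather than spelling out the Comparison Lemma, and for the convergence part it defers entirely to \cite[Proposition~3]{tan2021high}. The bookkeeping subtlety you flag---that the intermediate levels only converge asymptotically, so the next comparison inequality is never exactly ``activated'' in finite time---is precisely what that cited proposition handles, and your acknowledgment of it is appropriate.
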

		\begin{proof}
			First, we show forward invariance of the set $\mathfrak{C}$.
		If $\mathfrak{h}(\mathbf{x},t)$ is a TCHBF, then $ \psi_{m}(\mathbf{x},t)\geq 0,\;\forall (\mathbf{x},t)\in\mathfrak{D}\times\left[t_0,\infty\right)$ according to Definition \ref{HOB}.
		Then,
		\begin{align*}
		  \dot \psi_{m-1}(\mathbf{x},t)& + \alpha_m\emph{\sgn}({\psi_{m-1}(\mathbf{x},t)})|\psi_{m-1}(\mathbf{x},t)|^{\gamma _{1m}}\\&+\beta_m\emph{\sgn}({\psi_{m-1}(\mathbf{x},t)})|\psi_{m-1}(\mathbf{x},t)|^{\gamma _{2m}}\geq 0.  
		\end{align*}
	By the proof of Lemma \ref{lem1}, it is concluded that if $\mathbf{x}(t_0)\in\mathfrak{C}_m(t_0)$, then we get $\psi_{m-1}(\mathbf{x},t)\geq 0,\;\forall t\in\left[t_0,\infty\right)$.
 Then, by \cite[Lemma 2]{glotfelter2017nonsmooth} and considering $\psi_{m-1}(\mathbf{x},t)$ given by \eqref{sets}, since $x(t_0)\in\mathfrak{C}_{m-1}(t_0)$, we also have $\psi_{m-2}(\mathbf{x},t)\geq 0,\;\forall t\in\left[t_0,\infty\right)$.
Iteratively, we can show $\psi_{k-1}(\mathbf{x},t)\geq 0,\;\forall t\in\left[t_0,\infty\right)$ for all $k\in\left\{{1,2,\cdots,m}\right\}$ which certifies $\mathbf{x}(t)\in\mathfrak{C}_k(t)$. Therefore, the set $\mathfrak{C}:=\cap_{k=1}^{m} \mathfrak{C}_{k}\subset\mathfrak{D}\subset\mathbb{R}^{n}$ is \emph{forward invariant}.
The proof of convergence property is similar to \cite[Proposition 3]{tan2021high} and is omitted due to space limitation.
		\end{proof}
			\begin{definition}\label{TFHOCBF}
		Consider the system 
		\begin{align}\label{cont_sys}
		    \dot {\mathbf{x}}={f}(\mathbf{x})+{g}(\mathbf{x})\mathbf{u},
		\end{align}
		with locally Lipschitz continuous functions $f$ and $g$.
		 A class $C^m$ function $\mathfrak{h}(\mathbf{x},t):\mathbb{R}^{ n}\times\left[t_0,\infty\right)\to\mathbb{R}$ is called a time-varying convergent higher order control barrier function (TCHCBF) of degree $m$ for this system under task $\phi$ of the form \eqref{2nd}, if there exist constants $0<\gamma_{1m}<1$, $\gamma_{2m}>1$, $\alpha_m>0$, $\beta_m>0$, and an open set $\mathfrak{D}$ with $\mathfrak{C}:=\cap_{k=1}^{m} \mathfrak{C}_{k}\subset\mathfrak{D}\subset\mathbb{R}^{n}$, $\mathfrak{C}_{k}$, $k=1,\cdots,m$, are defined as in \eqref{safe_sets2},  such that
		\begin{align}\label{cond}
		   &\frac{{\partial {\psi_{m-1}({{ \mathbf{x}}},t)}}}{{\partial {\mathbf{x}}}}\left( {{f}(\mathbf{x}) + {g}(\mathbf{x})\mathbf{u}} \right)
+ \frac{{\partial {	\psi_{m-1}(\mathbf{x},t)}}}{{\partial t}}\nonumber\\& \;\;\;\;\;\;\;\;\geq- {\alpha _m}\sgn({	\psi_{m-1}(\mathbf{x},t)}){|\psi_{m-1}(\mathbf{x},t)|}^{\gamma _{1m}}\nonumber\\& \;\;\;\;\;\;\;\; -{\beta _m}\sgn({	\psi_{m-1}({{\mathbf{x}}},t)}){|\psi_{m-1}(\mathbf{x},t)|}^{\gamma _{2m}},
		\end{align}
		where $\psi_{m-1}(\mathbf{x},t)$ is given by \eqref{sets}.
		\end{definition}
\begin{remark}
Given a TCHCBF $\mathfrak{h}(\mathbf{x},t)$ and a control signal $\mathbf{u}(\mathbf{x})$ that provides fixed-time convergence to the set $\mathfrak{C}_m$ and renders the system \eqref{cont_sys} forward complete \cite[Theorem  II.1]{kawan2021lyapunov}, it follows directly from Proposition \ref{prop1} that the set $\mathfrak{C}$ is convergent and forward-invariant.
\end{remark}
Next, we use the introduced TCHCBFs  to derive similar results to Section \ref{solution3} for $2^{nd}$ order leader-follower networks.
\subsection{Second order leader-follower multi-agent systems}
Consider a group of agents with $2^{nd}$ order dynamics as in \eqref{agent_fe}, under the task $\phi^d$. 
	We will formulate a quadratic program that renders   the set $\mathfrak{C}^d:=\cap_{k=1}^{2} \mathfrak{C}_{k}^d\subset\mathcal{S}^d\subset\mathbb{R}^{2n}$ corresponding to functions $\mathfrak{h}^d(x^d,t)$ and $\psi_1(x^d,t)$, defined by \eqref{safe_sets2}, \emph{robust convergent}, under the following Assumption.
\begin{assumption}\label{nost_2}
	Consider the $2^{nd}$ order leader-follower network \eqref{agent_fe} with the leader $i=n$. There exists a positive constant $\delta^d$ satisfying $	\|\sum\nolimits_{i\in\mathcal{N}_{n}, j\notin\mathcal{N}_{n}} \frac{{\partial \psi_1({{ x}^d},t)}}{{\partial {x_{j}^d}}} {\mathfrak{f}_{j}^d({x^d})}+\frac{{\partial \psi_1({{ x}^d},t)}}{{\partial {x_{i}^d}}} {\mathfrak{f}_{i,j}^d({x_{i}^d,x_{j}^d})}\|\leq \delta^d$, $\forall(x^d,t)\in\mathcal{S}^d\times[\tau_{l},\tau_{l+1})$, ${l\in\{0,\cdots,p^s-1\}})$.
	\end{assumption}
	In view of Assumption \ref{bound}, \eqref{sets} and the user defined function $\lambda_1(\cdot)$, Assumption \ref{nost_2} is feasible, too. In addition, there is no need for the leader to know $\delta^d$.

In the following, a control input $u_{n}$ will be found such that for all initial conditions $x^d(t_0)$, and under Assumption \ref{nost_2}, the trajectories of \eqref{agent_fe} converge to a the set $\mathfrak{C}_{1,rf}^d(t)\supset\mathfrak{C}^d(t)$ and $\psi_1( x^d,t)\in \mathfrak{C}_{2,rf}^d$, $\mathfrak{C}_{2,rf}^d(t)\supset \mathfrak{C}^d(t)$, in a fixed-time $t\leq T^d+t_0$, $T^d>0$. The sets $\mathfrak{C}_{1,rf}^d$ and $\mathfrak{C}_{2,rf}^d$ will be characterised in the sequel.
	
\textbf{QP formulation:} Define $z^d=\left[u_{n}, \varepsilon^d\right]^T\in\mathbb{R}^{2}$, and consider the following optimization problem.
\begin{align*}
&\mathop {\min }\limits_{u_{n}\in\mathbb{R},\varepsilon^d\in\mathbb{R}_{\geq 0}} \frac{1}{2}{{z^d}^T}z^d\label{B2}
\end{align*}\\
\rm{s.t.}\;\;\;
	\begin{equation}\label{CLF}
					\begin{array}{l}\sum\nolimits_{i\in\mathcal{N}_{n}} \frac{{\partial\psi_1({{ x}^d},t)}}{{\partial {x_{i}^d}}} {\mathfrak{f}_{i,i}^d({x_{i}^d})}+ \frac{{\partial\psi_1({{ x}^d},t)}}{{\partial {x_{i}^d}}} {\mathfrak{f}_{n,i}^d({x_{n}^d,x_{i}^d})}\\ + \frac{{\partial \psi_1({{ x}^d},t)}}{{\partial {x_{n}^d}}}{\mathfrak{g}_{n}^d}({x_{n}^d}){u_{n}}+ \frac{{\partial\psi_1({{ x}^d},t)}}{{\partial {x_{n}^d}}} {\mathfrak{f}_{n,n}^d({x_{n}^d})}\\+\frac{{\partial\psi_1({{ x}^d},t)}}{{\partial t}}\ge - {\alpha_2^d}\;\emph{\sgn}({	\psi_1({{ x}^d},t)}){	|\psi_1({x^d},t)|}^{\gamma _{12}^d}\\ \;\;\;\;\;\;\;\;\;\;\;\;\;\;\;\;\;\;\;\;-{\beta_2^d}\;\emph{\sgn}({	\psi_1({x^d},t)}){	|\psi_1({x^d},t)|}^{\gamma _{22}^d}-\varepsilon^d,
					\end{array}
					\end{equation}
					 where ${\alpha_2^d}>0, {\beta_2^d}>0$, $0<\gamma_{12}^d<1$,
$\gamma_{22}^d>1$.
		\begin{theorem}\label{Th2}
		Consider a given TCHCBF $\mathfrak{h}^d(x^d,t)$ from Definition \ref{TFHOCBF} with the associated functions $\psi_k(x^d,t)$, $k\in\{1, 2\}$, as defined in \eqref{sets}. Any control signal $u_{n}:\mathbb{R}\to\mathbb{R}$ which solves the quadratic program \eqref{CLF} renders the set  $\mathfrak{C}^d(t)$ robust convergent for the leader-follower network \eqref{agent_fe}, under Assumption \ref{nost_2}.
		\end{theorem}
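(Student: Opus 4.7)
The plan is to reduce Theorem \ref{Th2} to a two-level application of the argument used for Theorem \ref{lem11}, where the outer level invokes the TCHCBF machinery of Proposition \ref{prop1}. First I would argue feasibility of the QP: because the slack $\varepsilon^d\geq 0$ is free and only the leader input $u_{n}$ is constrained, the inequality \eqref{CLF} can always be satisfied by enlarging $\varepsilon^d$, so a minimizer $(u_{n}^\star,{\varepsilon^d}^\star)$ exists and is unique by strong convexity of the cost. I would then substitute the optimizer into \eqref{CLF} and, exactly as in the passage from \eqref{Ineq2} to \eqref{Ineq3} in Theorem \ref{lem11}, add and subtract the non-neighbor dynamic terms to rewrite the left-hand side as the full time derivative $\dot\psi_1(x^d,t)=\frac{\partial\psi_1}{\partial x^d}(\mathfrak{f}^d(x^d)+\mathfrak{g}^d(x^d)u) +\frac{\partial\psi_1}{\partial t}$ plus the residual that Assumption \ref{nost_2} bounds in norm by $\delta^d$.

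Next, I would apply Assumption \ref{nost_2} to move this residual to the right-hand side and obtain, along trajectories of \eqref{agent_fe},
\begin{equation*}
\dot\psi_1(x^d,t) \geq -\alpha_2^d\,\sgn(\psi_1)|\psi_1|^{\gamma_{12}^d}-\beta_2^d\,\sgn(\psi_1)|\psi_1|^{\gamma_{22}^d}-\delta^d-\varepsilon^d.
\end{equation*}
Introducing $V_2^d(x^d,t):=\max\{0,-\psi_1(x^d,t)\}$ and mimicking the Comparison-Lemma step of Theorem \ref{lem11}, I would derive
\begin{equation*}
\dot V_2^d \leq -\alpha_2^d (V_2^d)^{\gamma_{12}^d}-\beta_2^d (V_2^d)^{\gamma_{22}^d}+\delta^d+\varepsilon^d,
\end{equation*}
and then invoke \cite[Lemma 1]{sharifi2021} to conclude fixed-time convergence of $\psi_1(x^d,t)$ into a set $\mathfrak{C}_{2,rf}^d(t)\supset\mathfrak{C}_2^d(t)$ of the form $\{\psi_1\geq -\epsilon_{\max,2}^d\}$, with $\epsilon_{\max,2}^d$ given by the same three-branch formula as \eqref{D3} but with $(\alpha^s,\beta^s,\mu^s,k^s,\delta^s)$ replaced by their superscript-$d$ counterparts and with $\delta^d$ augmented by the (bounded) optimal slack.

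The main obstacle, and the step I would treat most carefully, is the propagation from $\psi_1$ being driven into $\mathfrak{C}_{2,rf}^d$ to the original barrier $\mathfrak{h}^d=\psi_0$ being driven into some $\mathfrak{C}_{1,rf}^d\supset\mathfrak{C}^d$. Using the definition $\psi_1=\dot\psi_0+\lambda_1(\psi_0)$ from \eqref{sets}, once $\psi_1(x^d,t)\geq -\epsilon_{\max,2}^d$ holds for $t\geq T_2^d$, we have $\dot\psi_0\geq -\lambda_1(\psi_0)-\epsilon_{\max,2}^d$. Because $\lambda_1$ is an extended class $\mathcal{K}$ function, the sublevel set $\{\psi_0:\lambda_1(\psi_0)\leq -2\epsilon_{\max,2}^d\}$ is repelling, so $\psi_0$ converges in finite (indeed exponential-like) time to $\{\psi_0\geq -\lambda_1^{-1}(\epsilon_{\max,2}^d)\}=:\mathfrak{C}_{1,rf}^d(t)$, which by construction contains $\mathfrak{C}^d(t)$; this step is essentially the convergence half of Proposition \ref{prop1} with a bounded disturbance $\epsilon_{\max,2}^d$, and I would cite \cite[Lemma 2]{glotfelter2017nonsmooth} or the analogue from \cite[Proposition 3]{tan2021high} to formalize it. Combining forward-invariance of each $\mathfrak{C}_{k,rf}^d$ (obtained exactly as the $V^d=0$ argument in Theorem \ref{lem11}) with the two convergence steps above yields robust convergence of $\mathfrak{C}^d(t):=\cap_{k=1}^{2}\mathfrak{C}_k^d(t)$, completing the proof.
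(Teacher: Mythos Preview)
Your proposal follows essentially the same two-stage argument as the paper: apply the Theorem~\ref{lem11} machinery to $\psi_1$ to obtain fixed-time convergence into $\mathfrak{C}_{2,rf}^d=\{\psi_1\geq -\epsilon_{\max}^d\}$, then propagate through $\psi_1=\dot\psi_0+\lambda_1(\psi_0)$ to conclude convergence of $\mathfrak{h}^d=\psi_0$ into an enlarged set. Two minor discrepancies are worth flagging. First, in the second step the paper explicitly restricts $\lambda_1$ to be \emph{linear} and invokes input-to-state safety together with the Comparison Lemma, obtaining only \emph{asymptotic} convergence of $\psi_0$ to $\{\mathfrak{h}^d\geq\lambda_1^{-1}(-\epsilon_{\max}^d)\}$; your claim of finite-time convergence at this stage is not correct (the scalar inequality $\dot\psi_0\geq -\lambda_1(\psi_0)-\epsilon_{\max}^d$ gives exponential, hence asymptotic, attraction), but since ``robust convergent'' is defined via $\lim_{t\to\infty}$, the theorem's conclusion is unaffected. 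Second, the paper does not fold the optimal slack ${\varepsilon^d}^\star$ into the disturbance budget as you do; it treats $\varepsilon^d$ purely as a feasibility relaxation and builds $\epsilon_{\max}^d$ from $\delta^d$ alone---your version is arguably more honest, but it is not what the paper proves.
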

		\begin{proof}
		In view of Theorem \ref{lem11}, constraint \eqref{CLF} corresponds to the fixed-time convergence of the closed-loop trajectories of network \eqref{agent_fe} to the set $\mathfrak{C}_{2,rf}^d(t):=\{ x^d\in\mathbb{R}^{2n}|\psi_1( x^d,t)\geq -\epsilon_{\max}^d\}$, where $\epsilon_{\max}^d$ is defined by the same formulation as in \eqref{D3}, within the fixed-time $T^d$ with similar expression as in \eqref{T}, built by parameters ${\alpha_2^d}, {\beta_2^d}>0$, $\gamma_{12}^d=1-\frac{1}{\mu^d}$,
$\gamma_{22}^d=1+\frac{1}{\mu^d}$, $\mu^d>1$, $k^d>1$ and $\delta^d$. These parameters are substitutions of ${\alpha^s}, {\beta ^s}$, $\gamma_{1}^s$,
$\gamma_{2}^s$, $\mu^s$, $k^s$ and $\delta^s$, respectively, in \eqref{T} and \eqref{D3}.
Then, according to \eqref{sets}, we get
$\dot{\mathfrak{h}}^d({ x}^d,t)+\lambda_1(\mathfrak{h}^d({ x}^d,t))\geq -\epsilon_{\max}^d$. 
Let $\lambda_1(\cdot)$ a linear extended class $\mathcal{K}$ function. Inspired by the notion of \emph{input-to-state safety} \cite{kolathaya2018input} and using the Comparison Lemma \cite[Lemma 3.4]{khalil2002nonlinear}, the set $\mathfrak{C}_{1,ref}^d(t):=\{ x^d\in\mathbb{R}^{2n}|\mathfrak{h}^d(x^d,t)\geq \lambda_1^{-1}(-\epsilon_{\max}^d)\}$, $t\geq T_e^d+t_0$ is forward-invariant and convergence of $\mathfrak{h}^d({ x}^d,t)$ to this set is achieved asymptotically.
Moreover, $\varepsilon^d>0$ relaxes \eqref{CLF} in the presence of conflicting specifications and its minimization results in a least violating solution to ensure the feasibility of \eqref{CLF}.
		\end{proof}
			\begin{figure*}
		\centering
		\hspace*{-0.5cm}
		\psfragscanon 
		\begin{subfigure}[b]{0.65\columnwidth}
		\psfrag{data1ps}[Bc][B1][0.45][0]{$p_3-p_1$}
		\psfrag{data2ps}[Bc][B1][0.45][0]{$p_3-p_2$}
		\includegraphics[width=6cm]{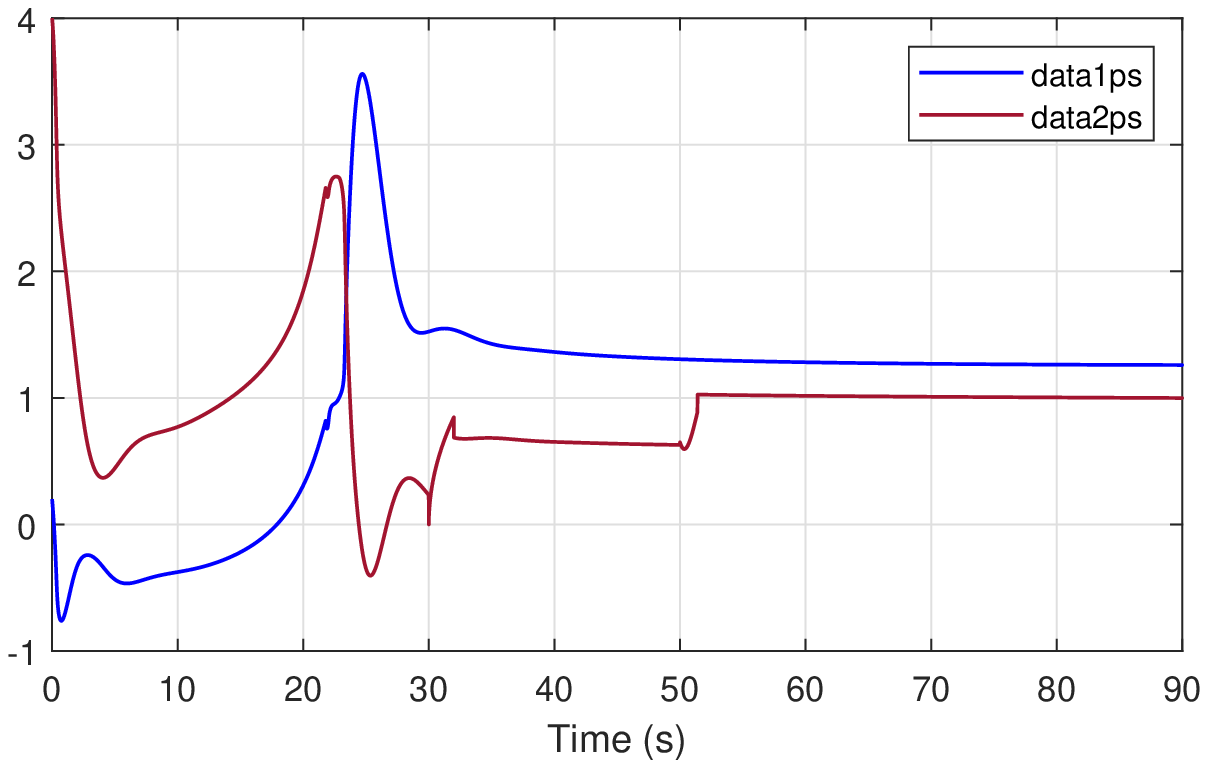}
		\caption{Position errors}
		\label{p_nostar}
		\end{subfigure}
		\begin{subfigure}[b]{0.65\columnwidth}
		\psfrag{data1vs}[Bc][B1][0.45][0]{$v_3-v_1$}
		\psfrag{data2vs}[Bc][B1][0.45][0]{$v_3-v_2$}
		\includegraphics[width=6cm]{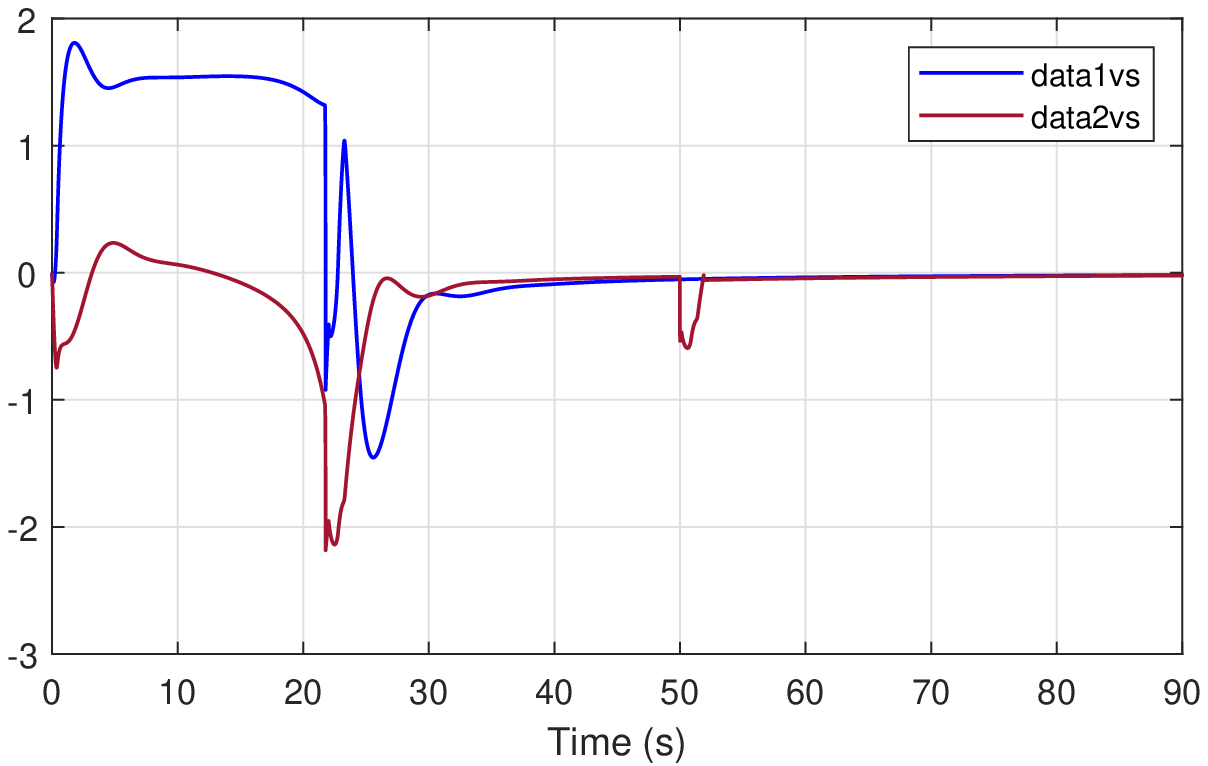}
		\caption{Velocity errors}
		\label{barrier}
		\end{subfigure}
		\begin{subfigure}[b]{0.65\columnwidth}
		\psfrag{data11111}[Bc][B1][0.43][0]{$\mathfrak{h}_1^d(x_1^d,t)$}
		\psfrag{data22222}[Bc][B1][0.43][0]{$\mathfrak{h}_2^d( x_2^d,t)$}
		\psfrag{data33333}[Bc][B1][0.43][0]{$\mathfrak{h}_3^d( x_3^d,t)$}
		\includegraphics[width=6cm]{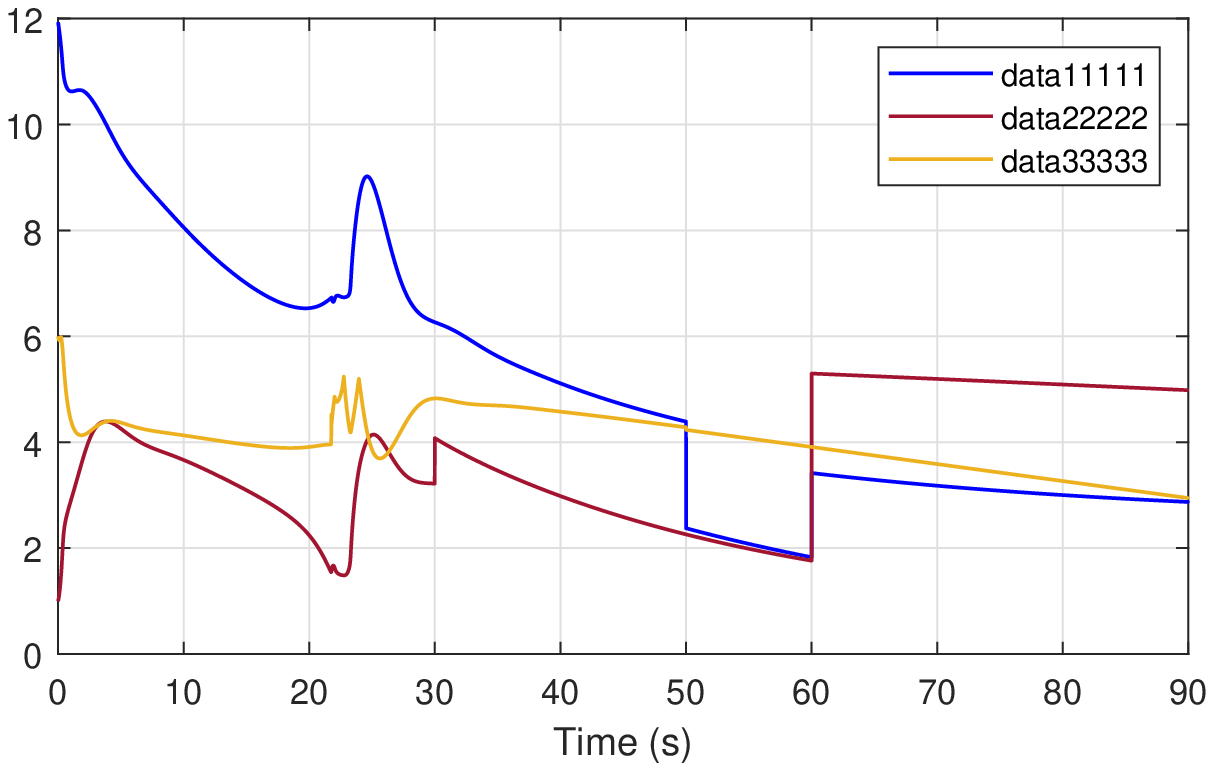}
		\caption{TCHCBFs}
		\label{barrier}
		\end{subfigure}
		 \psfragscanoff
		 		\caption{Leader-follower network \eqref{agent_fe} under full information of the leader from the network.}\label{fstar}
	\end{figure*}
			\begin{figure*}
		\psfragscanon 
		\begin{subfigure}[b]{0.65\columnwidth}
		\psfrag{datap1}[Bc][B1][0.45][0]{$p_3-p_1$}
		\psfrag{datap2}[Bc][B1][0.45][0]{$p_3-p_2$}
		\includegraphics[width=6cm]{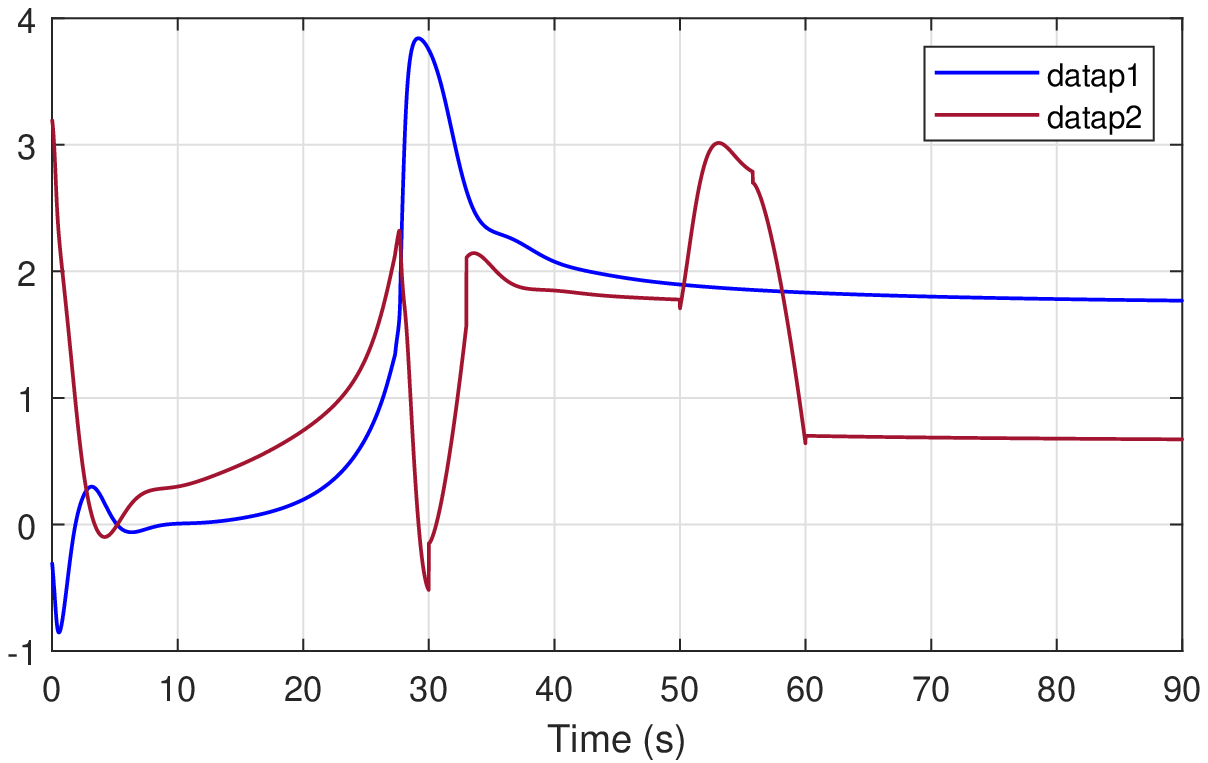}
		\caption{Position errors}
		\label{p_nostar}
		\end{subfigure}
		\begin{subfigure}[b]{0.65\columnwidth}
		\psfrag{data1}[Bc][B1][0.45][0]{$v_3-v_1$}
		\psfrag{data2}[Bc][B1][0.45][0]{$v_3-v_2$}
		\includegraphics[width=6cm]{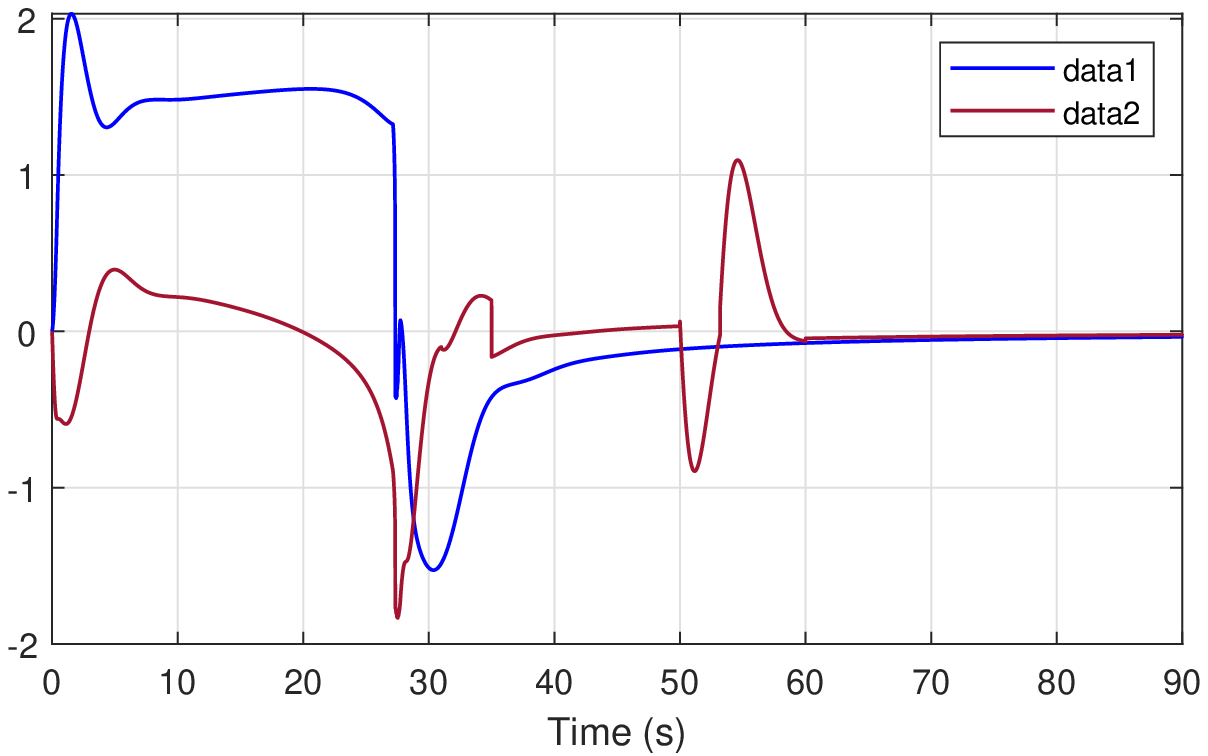}
		\caption{Velocity errors}
		\label{barrier}
		\end{subfigure}
		\begin{subfigure}[b]{0.65\columnwidth}
		\psfrag{dataaaaaa}[Bc][B1][0.43][0]{$\mathfrak{h}_1^d(x_1^d,t)$}
		\psfrag{databbbbb}[Bc][B1][0.43][0]{$\mathfrak{h}_2^d( x_2^d,t)$}
		\psfrag{dataccccc}[Bc][B1][0.43][0]{$\mathfrak{h}_3^d( x_3^d,t)$}
		\includegraphics[width=6cm]{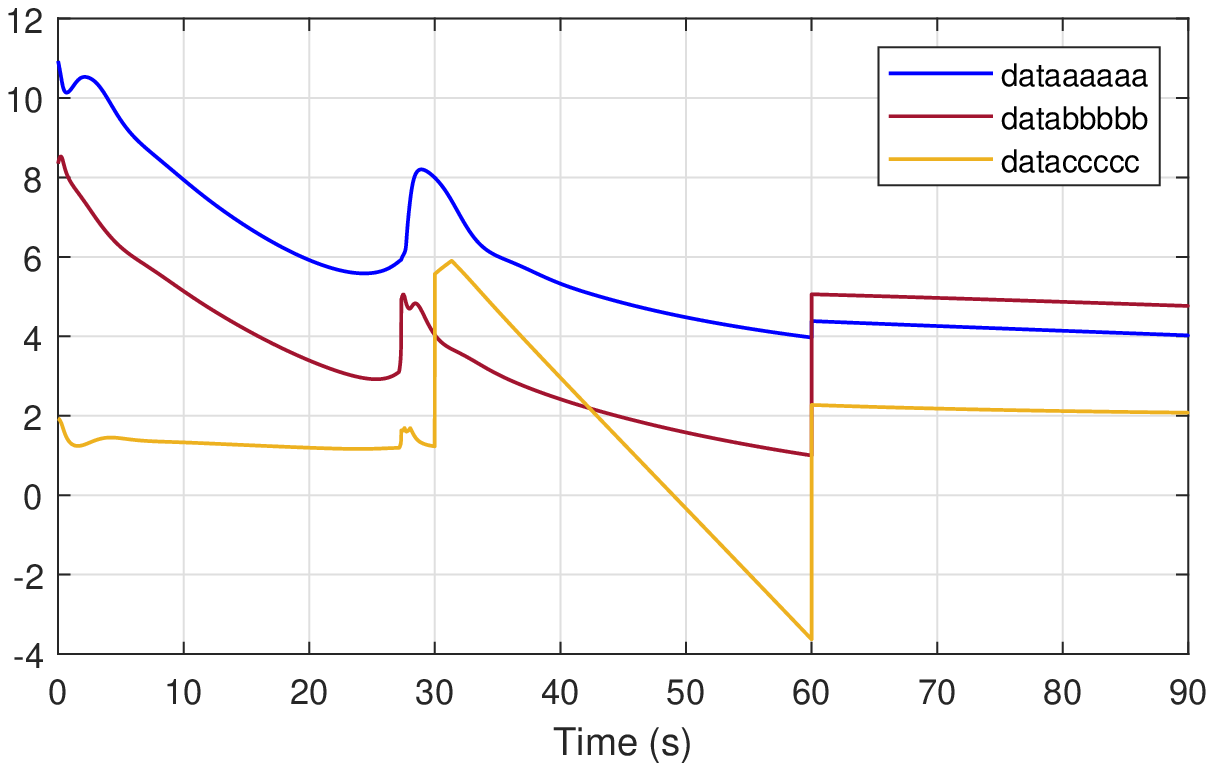}
		\caption{TCHCBFs}
		\label{barrier}
		\end{subfigure}
		 \psfragscanoff
		 		\caption{Leader-follower network \eqref{agent_fe} under Assumption \ref{nost_2}}\label{fnostarrr}
	\end{figure*}
	\begin{corollary}
	Consider TCHCBF $\mathfrak{h}^d(x^d,t)$ from Definition \ref{TFHOCBF} with the associated functions $\psi_k(x^d,t)$, $k\in\{1, 2\}$, as defined in \eqref{sets} for network \eqref{agent_fe}. Then, any control signal $u_{n}$ satisfying
	\begin{equation*}
					\begin{array}{l}	\sum\nolimits_{i\in\mathcal{V}} {\frac{{\partial\psi_1({{ x}^d},t)}}{{\partial {x_{i}^d}}} {\mathfrak{f}_i^d({x^d})}  + \frac{{\partial \psi_1({{ x}^d},t)}}{{\partial {x_{n}^d}}}{\mathfrak{g}_{n}^d}({x_{n}^d}){u_{n}}}\\+ \frac{{\partial\psi_1({{ x}^d},t)}}{{\partial t}}\ge - {\alpha_2^d}\;\sgn({\psi_1({{ x}^d},t)}){	|\psi_1({x^d},t)|}^{\gamma _{12}^d}\\ \;\;\;\;\;\;\;\;\;\;\;\;\;\;\;\;\;\;\;\;-{\beta_2^d}\;\sgn({	\psi_1({x^d},t)}){	|\psi_1({x^d},t)|}^{\gamma _{22}^d},
					\end{array}
					\end{equation*}
	for constants $\alpha_2^d>0$, $\beta_2^d>0$, $0<\gamma_{12}^d<1$, $\gamma_{22}^d>1$, renders the set $\mathfrak{C}^d:=\cap_{k=1}^{2} \mathfrak{C}_{k}^d\subset\mathcal{S}^d\subset\mathbb{R}^{2n}$ convergent and forward invariant. Moreover, it holds that $x^d\models\phi^d$ within $T^d\leq \frac{1}{\alpha_2^d(1 - \gamma_{12}^d)}+\frac{1}{\beta_2^d(\gamma_{22}^d-1)}$.
	\end{corollary}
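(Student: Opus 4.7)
The plan is to treat this corollary as the full-information (non-robust) specialization of Theorem \ref{Th2}, where the leader has access to every agent's drift term, so the mismatch term bounded by $\delta^d$ in Assumption \ref{nost_2} is absent and the QP slack $\varepsilon^d$ can be set to zero. The proof then reduces to combining a Lyapunov argument analogous to Lemma \ref{lem1} for the outermost barrier $\psi_1$ with the cascade of Proposition \ref{prop1} for the underlying $\mathfrak{h}^d$.

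First, I would rewrite the left-hand side of the stated inequality as the total time derivative $\dot \psi_1(x^d,t) = \frac{\partial \psi_1}{\partial x^d}\bigl(\mathfrak{f}^d(x^d)+\mathfrak{g}^d(x^d)u\bigr)+\frac{\partial \psi_1}{\partial t}$, using $u=u_n$ and the fact that $\mathfrak{g}^d(\cdot)$ has a single nonzero block at the leader index, so $\frac{\partial \psi_1}{\partial x^d}\mathfrak{g}^d(x^d)u = \frac{\partial \psi_1}{\partial x_n^d}\mathfrak{g}_n^d(x_n^d)u_n$. The hypothesis therefore reduces to
\begin{equation*}
\dot \psi_1(x^d,t) \geq -\alpha_2^d\,\sgn(\psi_1)\,|\psi_1|^{\gamma_{12}^d}-\beta_2^d\,\sgn(\psi_1)\,|\psi_1|^{\gamma_{22}^d}.
\end{equation*}
Following the Lyapunov argument in the proof of Lemma \ref{lem1} with $V^d(x^d,t)=\max\{0,-\psi_1(x^d,t)\}$ and applying the fixed-time comparison result of \cite{bhat2000finite}, one obtains forward invariance of $\mathfrak{C}_2^d(t) = \{x^d : \psi_1(x^d,t) \geq 0\}$ together with fixed-time convergence of $\psi_1$ into $\mathfrak{C}_2^d$ within the stated bound $T^d \leq \frac{1}{\alpha_2^d(1-\gamma_{12}^d)}+\frac{1}{\beta_2^d(\gamma_{22}^d-1)}$.

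Second, I would cascade down to $\mathfrak{C}_1^d$ via Proposition \ref{prop1}: once $\psi_1(x^d,t)\geq 0$, the defining relation $\psi_1 = \dot{\mathfrak{h}}^d+\lambda_1(\mathfrak{h}^d)$ with $\lambda_1$ an extended class $\mathcal{K}$ function implies, by \cite[Lemma 2]{glotfelter2017nonsmooth}, that $\mathfrak{h}^d(x^d,t)\geq 0$ is preserved along trajectories starting in $\mathfrak{C}_1^d$. Consequently $\mathfrak{C}^d = \mathfrak{C}_1^d\cap \mathfrak{C}_2^d$ is convergent and forward invariant, and since $\mathfrak{h}^d$ is built from the STL task $\phi^d$ as in Section \ref{solution}, $\mathfrak{h}^d\geq 0$ yields $x^d\models \phi^d$. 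The main obstacle is the two-level cascade: $T^d$ is the entry time into $\mathfrak{C}_2^d$, not directly into $\mathfrak{C}_1^d$, so care is needed to ensure that forward invariance of $\mathfrak{C}_1^d$ together with convergence of $\psi_1$ implies STL satisfaction within the claimed horizon — which follows because $\mathfrak{h}^d(x^d(t_0))\geq 0$ and $\psi_1\geq 0$ jointly certify membership in $\mathfrak{C}^d$ and this membership is preserved by the cascaded comparison argument.
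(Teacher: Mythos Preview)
Your proposal is correct and follows essentially the same route as the paper, whose proof is the one-liner ``Follows by the proof of Lemma \ref{lem1} with incorporating the arguments in Proposition \ref{prop1}.'' Your two steps---applying the $V=\max\{0,-\psi_1\}$ argument of Lemma \ref{lem1} to obtain fixed-time convergence and forward invariance of $\mathfrak{C}_2^d$, then invoking the cascade of Proposition \ref{prop1} to propagate these properties to $\mathfrak{C}_1^d$ and hence to $\mathfrak{C}^d$---are exactly what the paper has in mind, and your added remark about the timing subtlety (that $T^d$ bounds entry into $\mathfrak{C}_2^d$ rather than $\mathfrak{C}_1^d$) is a legitimate observation that the paper leaves implicit.
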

	\begin{proof}
	Follows by the proof of Lemma \ref{lem1} with incorporating the arguments in Proposition \ref{prop1}.
	\end{proof}
\begin{remark}\label{singularity}
Note that there might exist singularities in the solution of \eqref{CLF} in some points; in particular, whenever $\frac{{\partial {\psi_1({{x^d}},t)}}}{{\partial {x_{n}^d}}}{g}_{n}^d({x_{n}^d})=0$. 
Under the assumption that singularities lie inside the safe sets, it can be shown that the required inequalities remain feasible and can be satisfied \cite[Proposition 4]{tan2021high}.
\end{remark}
\section{Simulations}\label{sim}
Consider a leader-follower multi-agent system consisting of $M:=3$ second order dynamics agents.  We consider dependent tasks, where the third agent acts as the leader. 
Consider the formula $\phi^d=\phi_1^d\wedge\phi_2^d\wedge\phi_3^d$ with $\phi_1^d:=  G_{\left[ {10,30} \right]}(|{{v_3} - {v_2}}| \le {2})\wedge F_{\left[ {10,90} \right]}(|{{p_1} + 1 - {p_3}}| \le {1})$, $\phi_2^d:= F_{\left[ {10,30} \right]}(|{{v_3}-v_2 }| \le {1})\wedge G_{\left[ {30,90} \right]}(| {{v_1} - {v_3}}| \le {2})$, $\phi_3^d:=F_{\left[ {10,60} \right]}(|{{v_3} - {v_1}}| \le {1})\wedge G_{\left[ {60,90} \right]}(| {{v_2} - {v_3}}| \le {1})\wedge G_{\left[ {50,60} \right]}(|{{p_2}+1 - {p_3}}| \le {1})$.
As the position dependent formulas are of relative degree $2$, we use TCHCBFs of order $m=2$. Furthermore, TCHCBFs of order $m=1$ are considered for velocity dependent specifications. 
We choose the parameters of the QP formulation as $\mu^d=2$, $\alpha_2^d=\beta_2^d=1$, and $\lambda_1(r):=r$.
We focus on the effect of leader agent information on the group task satisfaction.
First, we consider the network \eqref{agent_fe}, where $L:=\left[ {\begin{array}{*{20}{c}}
{1}&{0}&{-1}\\
{0}&{1}&{-1}\\
{0}&{0}&{0}
\end{array}} \right]$ and the input matrix $\mathfrak{g}^d(x^d):=\left[ {\begin{array}{*{20}{c}}
0_{1\times 5},1
\end{array}} \right]^T$, where leader has knowledge of the functions $ \frac{{\partial\psi_1({x^d},t)}}{{\partial {x_i^d}}}$ and dynamics ${\mathfrak{f}_{i}^d({x^d})}$, ${i\in\left\{1,\cdots,n\right\}}$ (an equivalent condition to Assumption \ref{star} for $2^{nd}$ order dynamics), where the convergent and forward invariance property of set $\mathfrak{C}^d(t)$, as well as task satisfaction are concluded as shown in Fig. \ref{fstar}.
Next, we consider the agent $i=1$ as the leader's neighbor and $i=2$ as a neighbor to $i=1$ under Assumption \ref{nost_2}, where $\mathfrak{f}^d(x^d):=\left[ {\begin{array}{*{20}{c}}
{{0_3}}&{{I_3}}\\
{ - L}&{ - L}
\end{array}} \right]x^d$ with $L:=\left[ {\begin{array}{*{20}{c}}
{2}&{-1}&{-1}\\
{-1}&{1}&{0}\\
{0}&{0}&{0}
\end{array}} \right]$. By solving \eqref{CLF} where $\delta^d=2.86$, the fixed-time convergence to the set $\mathfrak{C}_{2,ref}^d(t)\supset \mathfrak{C}^d(t)$ is achieved with $\epsilon_{max}^d=6.01$ using \eqref{D3}, which gives $\mathfrak{h}^d(x^d,t)\geq \lambda_1^{-1}(-\epsilon_{\max}^d)=-6.01$. Fig. \ref{fnostarrr} shows a violation in satisfaction of the third task in $t\in \left[50,60\right]$ which certifies this result, although it is less conservative than the estimation.
The computation times on an Intel Core i5-8365U with 16 GB of RAM are about $2.1$ms.
\section{Conclusion}\label{conc}
Based on a class of time-varying convergent higher order control barrier functions, we have presented feedback control strategies to find solutions for the leader-follower multi-agent systems performance under STL tasks, based on the leader's knowledge on the followers' states. The finite convergence time is characterized independently of the initial conditions of the agents. 
Future work will extend these results to decentralized barrier certificates in networks containing more than one leader.

\bibliographystyle{IEEEtran}
\bibliography{ref}

\end{document}